%Version 09.03.2016, for IEEE IT
%%%%%%%%%%%%%%%%%%%%%%%%%%%%%%  IEEEsample2e.tex %%%%%%%%%%%%%%%%%%%%%%%%%%%%%%
%% changes for IEEEtrans.cls marked with !PN
%% except all occ. of IEEEtran.sty changed IEEEtran.cls
%%%%%%%%%%                                                       %%%%%%%%%%%%%
%%%%%%%%%%    More information: see the header of IEEEtran.cls   %%%%%%%%%%%%%
%%%%%%%%%%                                                       %%%%%%%%%%%%%
%%%%%%%%%%%%%%%%%%%%%%%%%%%%%%%%%%%%%%%%%%%%%%%%%%%%%%%%%%%%%%%%%%%%%%%%%%%%%%%

%\documentclass[twocolumn]{IEEEtran} %!PN
%\documentclass[draft]{IEEEtran} %!PN
%\documentstyle[twocolumn]{IEEEtran}
%\documentstyle[12pt,twoside,draft]{IEEEtran}
%\documentstyle[9pt,twocolumn,technote,twoside]{IEEEtran}
%\documentclass[journal,draftcls,onecolumn,12pt,twoside]{IEEEtran}
\documentclass[journal,onecolumn,12pt]{IEEEtran}
\usepackage{pslatex}
\usepackage{amsfonts,color,morefloats}
\usepackage{amssymb,amsmath,latexsym,amsthm}
\usepackage{cases}
\usepackage[noadjust]{cite}

\newcommand{\fp}{{\mathbb F}_{p}}
\newcommand{\fq}{{\mathbb F}_{q}}
\newcommand{\ftwo}{{\mathbb F}_{2}}
\newcommand{\fqm}{{\mathbb F}_{q^m}}

\newcommand{\Tr}{{\rm {Tr}}}
\newcommand{\C}{{\mathcal{C}}}

\newcommand{\Supp}{{{\rm Supp}}}

\newcommand{\PG}{{{\rm PG}}}

\newtheorem{thm}{Theorem}
\newtheorem{lem}{Lemma}
\newtheorem{cor}{Corollary}

\newtheorem{prop}{Proposition}

\newtheorem{example}{Example}
\newtheorem{remark}{Remark}

\setcounter{page}{1}

\begin{document}
\title{New constructions of optimal linear codes from simplicial complexes}
\date{\today}
%\author{
%\thanks{
%
% }
%}
\author{Zhao Hu, Yunge Xu, Nian Li, Xiangyong Zeng, Lisha Wang and Xiaohu Tang
\thanks{Z. Hu, Y. Xu, X. Zeng and X. Tang are with the Hubei Key Laboratory of Applied Mathematics, Faculty of Mathematics and Statistics, Hubei University, Wuhan, 430062, China. N. Li and L. Wang are with the Hubei Key Laboratory of Applied Mathematics, School of Cyber Science and Technology, Hubei University, Wuhan, 430062, China. X. Tang is also with the Information Coding $\&$ Transmission Key Lab of Sichuan Province, CSNMT Int. Coop. Res. Centre (MoST), Southwest Jiaotong University, Chengdu, 610031, China. Email: zhao.hu@aliyun.com, xuy@hubu.edu.cn, nian.li@hubu.edu.cn, xiangyongzeng@aliyun.com, wangtaolisha@163.com, xhutang@swjtu.edu.cn}
}
\maketitle

\begin{abstract}
In this paper, we construct a large family of projective linear codes over $\fq$ from the general simplicial complexes of $\fq^m$ via the defining-set construction, which generalizes the results of [IEEE Trans. Inf. Theory 66(11):6762-6773, 2020]. The parameters and weight distribution of this class of codes are completely determined. By using the Griesmer bound, we give a necessary and sufficient condition such that the codes are Griesmer codes and a sufficient condition such that the codes are distance-optimal. For a special case, we also present a necessary and sufficient condition for the codes to be near Griesmer codes. Moreover, by discussing the cases of simplicial complexes with one, two and three maximal elements respectively, the parameters and weight distributions of the codes are given more explicitly, which shows that the codes are at most $2$-weight, $5$-weight and $19$-weight respectively. By studying the optimality of the codes for the three cases in detail, many infinite families of optimal linear codes with few weights over $\fq$ are obtained, including Griesmer codes, near Griesmer codes and distance-optimal codes.
\end{abstract}

\begin{IEEEkeywords}
Optimal linear code, Simplicial complex, Griesmer code, Near Griesmer code, Weight distribution
\end{IEEEkeywords}

%In this paper, we construct a large family of projective linear codes over ${\mathbb F}_{q}$ from the general simplicial complexes of ${\mathbb F}_{q}^m$ via the defining-set construction, which generalizes the results of [IEEE Trans. Inf. Theory 66(11):6762-6773, 2020]. The parameters and weight distribution of this class of codes are completely determined. By using the Griesmer bound, we give a necessary and sufficient condition such that the codes are Griesmer codes and a sufficient condition such that the codes are distance-optimal. For a special case, we also present a necessary and sufficient condition for the codes to be near Griesmer codes. Moreover, by discussing the cases of simplicial complexes with one, two and three maximal elements respectively, the parameters and weight distributions of the codes are given more explicitly, which shows that the codes are at most 2-weight, 5-weight and 19-weight respectively. By studying the optimality of the codes for the three cases in detail, many infinite families of optimal linear codes with few weights over ${\mathbb F}_{q}$ are obtained, including Griesmer codes, near Griesmer codes and distance-optimal codes. This is a joint work with Yunge Xu, Nian Li, Xiangyong Zeng, Lisha Wang and Xiaohu Tang.

\section{Introduction}

Let $\mathbb{F}_{q^m}$ be the finite field with $q^m$ elements and $\mathbb{F}_{q^m}^{*}=\mathbb{F}_{q^m}\backslash\{0\}$, where $q$ is a power of a prime $p$ and $m$ is a positive integer. An $[n, k, d]$ linear code $\mathcal{C}$ over $\fq$ is a $k$-dimensional subspace of $\fq^{n}$ with minimum (Hamming) distance $d$.
An $[n,k,d]$ linear code $\mathcal{C}$ over $\fq$ is called distance-optimal if no $[n,k,d+1]$ code exists (i.e., $\C$ has the largest minimum distance for given $n$ and $k$) and it is called almost distance-optimal if there exists an $[n,k,d+1]$ distance-optimal code. An $[n,k,d]$ linear code $\mathcal{C}$ is called optimal (resp. almost optimal) if its parameters $n$, $k$ and $d$ (resp. $d+1$) meet any bound on linear codes with equality \cite{HWPV}. The Griesmer bound \cite{JHG,GSJS} for an $[n,k,d]$ linear code $\C$ over $\fq$ is given by
\[ n\geq  g(k,d):=\sum_{i=0}^{k-1} \lceil \frac{d}{q^i}\rceil,\]
where $\lceil \cdot \rceil$ denotes the ceiling function. An $[n,k,d]$ linear code $\mathcal{C}$ is called a Griesmer code (resp. near Griesmer code) if its parameters $n$ (resp. $n-1$), $k$ and $d$ achieve the Griesmer bound.
%and called a near Griesmer code if $n-1$, $k$ and $d$ achieve the Griesmer bound.
Griesmer codes have been an interesting topic of study for many years due to not only their optimality but also their geometric applications \cite{DING1,DING2}. It's worth noting that Griesmer codes are always distance-optimal, and an $[n,k,d]$ near Griesmer code over $\fq$ with $k>1$ is distance-optimal if $q\mid d$ and almost distance-optimal otherwise \cite{HLZWT}.
In coding theory, it's a fundamental problem to construct (distance-)optimal codes. We refer the readers to \cite{HDWZ,HZHZ,HLZWT,HJKN,HJKWY,JYHLL,LXYQ,LYDCTC,WYLCXF,WYZXYQ} and references therein for recent works on the construction of optimal linear codes.

Let $A_{i}$ denote the number of codewords with Hamming weight $i$ in a code $\mathcal{C}$ of length $n$. The weight enumerator of $\mathcal{C}$ is defined by
$1+A_{1}z+A_{2}z^{2}+\ldots +A_{n}z^{n}$. The sequence $(1, A_{1}, A_{2}, \ldots ,A_{n})$ is called the weight distribution of $\mathcal{C}$.
The weight distribution not only contains crucial information about the error correcting capability of the code, but also allows the computation of the error probability of error detection and correction of a given code \cite{KTKT}.
A code is said to be a $t$-weight code if the number of nonzero $A_{i}$ in the sequence $(A_{1}, A_{2}, \ldots ,A_{n})$ is equal to $t$. Linear codes with few weights have applications in secret sharing schemes \cite{ARJD,CCDY}, authentication codes \cite{DCHT,DW}, association schemes \cite{CAGJ}, strongly regular graphs and some other fields.
Thus the study of weight distribution and the construction of linear codes with few weights have attracted much attention in coding theory, and some nice works have focused on the topics (see, for example, \cite{DING5,DD,DN,MS,MQRT,MAS,MSOFA,TLQZH,ZZNL}).
%Many linear codes with few weights were constructed in recent years

Recently, constructing optimal or good linear codes from mathematical objects attracts much attention and many attempts have been made in this direction. Some linear codes with good parameters have been constructed by using different mathematical objects such as cryptographic functions \cite{DING4,DING5,DD,HWLZ,MAS,TLQZH,TXFC,ZZNL}, combinatorial $t$-designs \cite{DING5,DCTC}, (almost) difference sets \cite{DING1,HZHZ,ZZZZ}, simplicial complexes (also called down-sets) \cite{CHJY,HJKN,JYHLL,WYLCXF,WYZXYQ}, posets \cite{HJKWY} and maximal arcs \cite{HDWZ}. In these various kinds of mathematical objects, simplicial complexes (which are certain subsets of $\fq^m$ with good algebraic structure) are really useful to construct optimal or good linear codes.
The investigation of constructing linear codes from simplicial complexes, to the best of our knowledge, first appeared in  \cite{CHJY} (in 2018), where the authors Chang and Hyun \cite{CHJY} constructed the first infinite family of binary minimal linear codes violating the Ashikhmin-Barg condition \cite{AAAB} by employing simplicial complexes of $\ftwo^m$ with two maximal elements. In 2020, Hyun et al. \cite{JYHLL} constructed infinite families of optimal binary linear codes from the general simplicial complexes of $\ftwo^m$ via the defining-set approach proposed by Ding and Niederreiter \cite{DN}.
Later, by using simplicial complexes of $\ftwo^m$ with one maximal element, several classes of optimal or good binary linear codes with few weights were derived in \cite{LXSM,WYLY,WYZXYQ} via different construction approaches.
%Wu et al. \cite{WYZXYQ} presented two classes of optimal few-weight codes over the ring $\ftwo+u\ftwo$, $u^2=0$, and Wu et al. \cite{WYLY} derived several classes of binary LCD codes and self-orthogonal codes via a different approach.
%In 2021, Li et al. \cite{LXSM} constructed some infinite families of binary optimal linear codes and binary minimal linear codes with few weights from the Gray image of the codes over the ring $\ftwo+u\ftwo+u^2\ftwo$ ($u^3=0$) constructed by using simplicial complexes of $\ftwo^m$ with one maximal element.
Shortly after, simplicial complexes of $\ftwo^m$ with one and two maximal elements were utilized to construct quaternary optimal linear codes in \cite{WYLCXF,ZXWY} by studying new defining sets of ${\mathbb F}_{4}^m$.
Recently, some researchers also concentrated on linear codes constructed from simplicial complexes of $\fq^m$ with $q>2$. Hyun et al. \cite{HJKN}  first defined the simplicial complexes of $\fp^m$ for an odd prime $p$ in 2019, and after that several classes of optimal $p$-ary few-weight linear codes were constructed in \cite{HJKN,SMLX,WYHJ} by using different simplicial complexes of $\fp^m$ with one maximal element. Later, Pan and Liu \cite{PYLY} defined the simplicial complexes of ${\mathbb F}_{3}^m$ in another way and presented three classes of few-weight ternary codes with good parameters from their defined simplicial complexes of ${\mathbb F}_{3}^m$ with one and two maximal elements.
%In 2019, Hyun et al. \cite{HJKN} constructed several families of optimal $p$-ary linear codes by making use of down-sets with one maximal element of $\fp^m$, where $p$ is an odd prime.
%In 2021, Shi et al. \cite{SMLX} constructed two classes of optimal $p$-ary few-weight codes by utilizing down-sets of $\fp^m$ with one maximal element. In 2020, Wu et al. \cite{WYHJ} constructed optimal few-weight codes over $\fp+u\fp$, $u^2=0$ by employing down-sets of $\fp^m$ with one maximal element, where $p$ is an odd prime.

As can be seen from the previous works, there are only few known results on constructing optimal linear codes over $\fq$ by making use of simplicial complexes of $\fq^m$ with $q>2$ and actually no result associated with the general simplicial complexes of $\fq^m$ except the results in \cite{JYHLL} for the case $q=2$, where ``general" means that there is no restriction on the number of maximal elements of the employed simplicial complexes. Moreover, it's worth noting that in \cite{JYHLL} the weight distributions of the binary linear codes constructed from the general simplicial complexes of $\ftwo^m$ were discussed only for the case of simplicial complexes with two maximal elements. Naturally, one has the following interesting question:\\
\indent \emph{Question}: Can we obtain more infinite families of optimal linear codes over $\fq$ from the general simplicial complexes of $\fq^m$ for any prime power $q$ and determine their weight distributions?

%Can the previous works \cite{CHJY,JYHLL,WYZXYQ,WYLY,WYLCXF,LXSM} where simplicial complexes of $\ftwo^m$ were employed be generalized from $\ftwo$ to $\fq$ ($q$ can be any prime power), especially the nice work \cite{JYHLL} in which general simplicial complexes of $\ftwo^m$ were utilized?
The question above is the major motivation of this paper.
In this paper, we first define the simplicial complexes of $\fq^m$ for any prime power $q$ (see the details in next section) in a different way from the definitions given by \cite{HJKN,PYLY} for a prime $p$ and $q=3$ resepctively. Then we employ the general simplicial complexes of $\fq^m$ to construct projective linear codes $\C$ over $\fq$ via the defining-set construction. With detailed computation on certain exponential sums and by using the principle of inclusion-exclusion, we completely determine the parameters and weight distribution of $\C$. We give a necessary and sufficient condition such that $\C$ is a Griesmer code, which is indeed the Solomon-Stiffler code, and provide an explicit computable criterion for $\C$ to be distance-optimal, which can be easily satisfied (see Remark \ref{remark-thm1-d-optimal}). In addition, for a special case of the general simplicial complexes of $\fq^m$, we give a necessary and sufficient condition such that $\C$ is a near Griesmer code. This shows that many (distance-)optimal codes can be produced from this construction. Moreover, for the three cases of simplicial complexes of $\fq^m$ with one, two and three maximal elements respectively, we give their parameters and weight distributions more explicitly and show that $\C$ is at most $2$-weight, $5$-weight and $19$-weight, respectively. In these three cases, infinite families of optimal linear codes with few weights are produced, including Griesmer codes, near Griesmer codes and distance-optimal codes.

Note that our results extend those of \cite{JYHLL} from $\ftwo$ to $\fq$, and thus it answers the second question in \cite{HLZWT}.
Most notably, the main technique of this paper is quite different from that of all previous works on construction of linear codes from simplicial complexes (including the work \cite{JYHLL}), where the main technique used in the previous works is the so-called generating functions. Furthermore, it's worth noting that by the definition of simplicial complexes of $\fq^m$ and the main technique of this paper, the results in the previous works \cite{CHJY,LXSM,WYLY,WYLCXF,WYZXYQ,ZXWY}, where the simplicial complexes of $\ftwo^m$ with one or two maximal elements were employed, can also be generalized to a general $q$.

The rest of this paper is organized as follows. Section \ref{sec2} introduces the concept of the simplicial complexes of $\fq^m$, recalls the defining-set construction of linear codes and the projective Solomon-Stiffler codes, and presents some useful auxiliary results. In Section \ref{sec3}, we investigate the projective linear codes constructed from the general simplicial complexes of $\fq^m$. Their parameters and weight distributions are completely determined and the optimality of these codes is characterized. Furthermore, in Section \ref{sec4}, we deeply discuss these codes for the cases of the simplicial complexes of $\fq^m$ with one, two and three maximal elements respectively, and consequently many infinite families of optimal linear codes with few weights can be obtained. Section \ref{sec6} makes a comparison of our codes to the known ones in the previous works. Section \ref{sec5} concludes this paper.

\section{Preliminaries} \label{sec2}
In this section, we present some preliminaries which will be used for the subsequent sections.

Here we first recall the relation between the finite field $\fqm$ and the vector space $\fq^m$. Let $\{\alpha_{1},\ldots,\alpha_{m}\}$ be a basis of $\fqm$ over $\fq$ and $\{\beta_{1},\ldots,\beta_{m}\}$ be its dual basis, which implies
\[\Tr_{q}^{q^m}(\alpha_{i}\beta_{j})=\left\{\begin{array}{ll}
0,    &   \mbox{ if } i\ne j;\\
1,   &   \mbox{ if } i = j,
\end{array} \right.\]
where $\Tr_{q}^{q^{m}}(\cdot)$ is the trace function from $\fqm$ to $\fq$.
Then every element $x\in \fqm$  can be uniquely represented as $x=\sum_{i=1}^{m} x_{i}\alpha_{i}$ where $x_{i}=\Tr_{q}^{q^m}(\beta_{i}x) \in \fq$, and $\fq^m$ is isomorphic to $\fqm$ under the mapping
\[ (x_{1},x_{2},\ldots,x_{m})\in \fq^m \mapsto  x=\sum_{i=1}^{m} x_{i}\alpha_{i}\in \fqm.\]
Throughout this paper, we identify $x\in \fqm$ with $(x_{1},x_{2},\ldots,x_{m})\in \fq^m$ with respect to the basis $\{\alpha_{1},\ldots,\alpha_{m}\}$ and hence $\fqm$ is identical to $\fq^m$.

\subsection{Simplicial Complexes of $\fq^m$}
Here we introduce the concept of simplicial complexes of $\fq^m$, where $q$ can be any prime power.
For two vectors $u=(u_{1},u_{2},\ldots,u_{m})$ and $v=(v_{1},v_{2},\ldots,v_{m})$ in $\fq^m$, we say that $u$ covers $v$, denoted $v\preceq u$, if $\Supp(v)\subseteq \Supp(u)$, where $\Supp(u)= \{1 \leq i \leq m : u_{i}\ne 0\}$ is the support of $u$.  A subset $\Delta$ of $\fq^m$ is called a simplicial complex if $u\in \Delta$ and $v\preceq u$ imply $v\in\Delta$.
An element $u$ in $\Delta$ with entries $0$ or $1$ is said to be maximal if there is no element $v\in \Delta$ such that $\Supp(u)$ is a proper subset of $\Supp(v)$. For a simplicial complex $\Delta \subseteq \fq^m$, let $\mathcal{F}=\{F_{1},F_{2},\ldots,F_{h}\}$ be the set of maximal elements of $\Delta$, where $h$ is the number of maximal elements in $\Delta$ and $F_{i}$'s are maximal elements of $\Delta$. Let $A_{i}=\Supp(F_{i})$ for $1\leq i \leq h$, which implies $A_{i}\subseteq [m]:=\{1,2,\ldots, m\}$. Note that $A_{i} \setminus A_{j}\ne \emptyset$ for any $1 \leq i \ne j \leq h$ by the definition. Let $\mathcal{A}=\{A_{1},A_{2},\ldots,A_{h}\}$ be the set of supports of maximal elements of $\Delta$, and $\mathcal{A}$ be called the support of $\Delta$, denoted $\Supp(\Delta)=\mathcal{A}$. Then one can see that a simplicial complex $\Delta$ is uniquely generated by $\mathcal{A}$, denoted $\Delta=\langle \mathcal{A}\rangle$.
Notice that both the set of maximal elements $\mathcal{F}$ and the support $\mathcal{A}$ of $\Delta$ are unique for a fixed simplicial complex $\Delta$. For any set $\mathcal{B}$ consisting of some subsets of $[m]$, we say that a simplicial complex $\Delta$ of $\fq^m$ is generated by $\mathcal{B}$, denoted $\Delta=\langle \mathcal{B} \rangle$, if $\Delta$ is the smallest simplicial complex of $\fq^m$ containing every element in $\fq^m$ with the support $B\in \mathcal{B}$.

Notice that the above definition of simplicial complexes of $\fq^m$ is a generalization of the original definition of simplicial complexes of $\ftwo^m$ \cite{AM,JYHLL}, and it is different from the two definitions presented in \cite{HJKN} for $\fp^m$ and in \cite{PYLY} for ${\mathbb F}_{3}^m$.

From another point of view, one can observe that the simplicial complex $\Delta$ with the support $\mathcal{A}=\{A_{1},A_{2},\ldots,A_{h}\}$ is indeed the union of $h$ vector subspaces of dimensions $|A_{i}|$ of the vector space $\fq^m$, where $|S|$ denotes the cardinality of a set $S$. Specially, $\Delta=\langle \{A\} \rangle$ with exactly one maximal element is an $|A|$-dimensional subspace of $\fq^m$, where $A\subseteq [m]$ is the support of the maximal element.

\subsection{The Defining-Set Construction of Linear Codes}
In 2007, Ding and Niederreiter \cite{DN} introduced a nice and generic way to construct linear codes via trace functions.
Let $D \subset \fqm$ and define
\begin{equation} \label{CD}
\C_D=\{c_{a}=(\Tr_{q}^{q^{m}}(ax))_{x\in D}: a\in \fqm\}.
\end{equation}
Then $\C_{D}$ is a linear code over $\fq$ of length $n:=|D|$. The set $D$ is called the defining set of $\C_D$ and the above construction is accordingly called the defining-set construction.
The defining-set construction is fundamental since every linear code over $\fq$ can be expressed as $\C_D$ for some defining set $D$ (possibly multiset) \cite{DING3}. If the defining set $D$ is well chosen, then $\C_D$ may have optimal or good parameters.
Recently, the defining-set construction of linear codes has attracted a lot of attention and many attempts have been made in this direction to obtain good or optimal linear codes.

It's known that $\C_D$ in \eqref{CD} is equivalent to
\begin{equation*}
\C_\textbf{D}=\{c_\textbf{a}=(\textbf{a} \cdot \textbf{x})_{\textbf{x}\in \textbf{D}}: \textbf{a}\in \fq^m\},
\end{equation*}
where $\textbf{D} \subseteq \fq^m$ is identical to $D \subseteq \fqm$ with respect to a basis of $\fqm$ over $\fq$, see its detailed proof for the binary case in \cite{LGCX}.
%Notice that this is indeed the construction approach used in \cite{JYHLL}.
Let $\textbf{D}=\{\textbf{d}_{1},\textbf{d}_{2}, \ldots, \textbf{d}_{n} \} \subseteq \fq^m$ and $G$ be an $m \times n$ matrix with $\textbf{d}_{i}$'s as its columns, i.e.,
\[G=[\textbf{d}_{1},\textbf{d}_{2}, \ldots, \textbf{d}_{n}].\]
Then the rows of $G$ generate the linear code $\C_D=\C_\textbf{D}$ \cite{XC}, and $G$ is the generator matrix of $\C_D$ if the rank of $G$ is $m$. This gives the generator matrix of $\C_D$.

\subsection{The Projective Solomon-Stiffler Codes}

The dual code of an $[n,k,d]$ linear code $\C$ over $\fq$ is defined by
$ \C^{\bot}=\{x\in \fq^{n}\,\,|\,\, x \cdot y = 0 \,\, {\rm for \,\, all} \,\, y\in \C \},$
where $x \cdot y$ denotes the Euclidean inner product of $x$ and $y$.
%The code $\C^{\bot}$ is an $[n,n-k]$ linear code over $\fq$.
A linear code $\C$ is called projective if its dual code has minimum distance at least $3$. Thus the dual of a projective code has better error correcting capability compared to that of a nonprojective code. By definition, it can be readily verified that the code $\C_{D}$ defined by \eqref{CD} is projective if and only if any two elements of $D$ are linearly independent over $\fq$.

Here we recall the concept of projective space to introduce the projective Solomon-Stiffler code smoothly. The projective space (also called projective geometry) $\PG(m-1, q)$ is the set of all the $1$-dimensional subspaces (called points) of the vector space $\fq^m$. Let $m \geq 2$. A point of $\PG(m-1, q)$ is given in homogeneous coordinates by $\mathbf{x}=(x_{0},x_{1}, \ldots, x_{m-1})\in \fq^m$ where all $x_{i}$'s are not all zero (which implies $\textbf{0}\notin \PG(m-1, q)$); each point has $q-1$ coordinate representations, since $\mathbf{x}$ and $\lambda\mathbf{x}$ yield the same $1$-dimensional subspace of $\fq^m$ for any $\lambda \in \fq^*$. Thus any two points of $\PG(m-1, q)$ are linearly independent over $\fq$. Note that $\fqm^*$ can be expressed as
\[\fqm^* = \fq^*\,\,\overline{\fqm^*}=\{yz: y \in \fq^{*} \mbox{ and } z \in \overline{\fqm^*}\}\]
where $z_{i}/z_{j} \notin \fq^{*}$ for each pair of distinct elements $z_{i}$ and $z_{j}$ in $\overline{\fqm^*}$. This defines the set $\overline{\fqm^*}$. Then the set of all points in $\PG(m-1, q)$ can be identified with the set $\overline{\fqm^*}$ since $\fq^m$ is identical to $\fqm$.
Therefore, the code $\C_{D}$ defined by \eqref{CD} is projective if $D\subseteq \PG(m-1, q)$.
In addition, it should be noted that the projective dimension of a projective subspace in $\PG(m-1,q)$ is one less than that of the corresponding subspace in the vector space $\fq^m$.

The well-known Solomon-Stiffler codes \cite{GSJS} are not only projective codes but also Griesmer codes. The Solomon-Stiffler codes are originally defined as follows: Let $S_{m,q}$ be the set of all distinct points in $\PG(m-1,q)$ and $F=\cup_{i=1}^{h}U_{i}$ be a union of $h$ disjoint projective subspaces of dimensions $u_{i}-1$, where $1\leq u_{1} \leq u_{2} \leq \cdots \leq u_{h} <m$ and at most $q-1$ of the subspaces $U_{i}$ have the same dimension. Then the matrix $G=[S_{m,q} \setminus F]$ whose columns correspond to all points in $S_{m,q} \setminus F$ generates a $[(q^m-1-\sum_{i=1}^{h}(q^{u_{i}}-1))/(q-1),m,q^{m-1}-\sum_{i=1}^{h}q^{u_{i}-1}]$ Griesmer code, called the Solomon-Stiffler code over $\fq$. %This construction is further generalized by Belov.
By the definition, one can notice that the Solomon-Stiffler code can also be constructed by the defining-set construction as in \eqref{CD} if $D$ is the set of $S_{m,q} \setminus F$, that is, $D=\PG(m-1,q) \setminus F$. In addition, it's interesting that the matrix $[S_{m,q}]$ generates the well-known Simplex code with parameters $[(q^m-1)/(q-1),m,q^{m-1}]$, which is also a projective Griesmer code.

To the best of our knowledge, the weight distribution of the Solomon-Stiffler codes still remains unknown. We will partially answer this interesting question in this paper.

\subsection{Useful Auxiliary Results}
Let $q$ be a power of a prime $p$ and denote the canonical additive character of $\fq$ by
\[\chi(x)=\zeta_{p}^{\Tr_{p}^{q}(x)},\]
where $\zeta_{p}$ is a primitive complex $p$-th root of unity and $\Tr_{p}^{q}(\cdot)$ is the trace function from $\fq$ to $\fp$.

For an $r$-dimensional $\fq$-subspace $H$ of $\fqm$, the dual of $H$ is defined by
\[H^{\bot}=\{v\in \fqm: \Tr_{q}^{q^m}(uv)=0 \mbox{ for all } u\in H\}.\]
The dual $H^{\bot}$ is an $(m-r)$-dimensional $\fq$-subspace of $\fqm$.  Recall that $\{\alpha_{1},\ldots,\alpha_{m}\}$ is a basis of $\fqm$ over $\fq$ and $\{\beta_{1},\ldots,\beta_{m}\}$ is its dual basis.
If $H$ is an $r$-dimensional $\fq$-subspace of $\fqm$ spanned by the set $\{\alpha_{i_{1}},\ldots,\alpha_{i_{r}}\}$, denoted $H=\mbox{\textbf{span}}\{\alpha_{i_{1}},\ldots,\alpha_{i_{r}}\}$,
then $H^{\bot}=\mbox{\textbf{span}}\{\beta_{i}: i \in [m] \backslash \{i_{1},\ldots,i_{r}\} \}$.

The following lemma regarding the exponential sum on the subspace $H$ of $\fqm$ will be helpful to prove our main result.

\begin{lem}(\cite{YLFL}) \label{lem-expsum-H}
Let $H$ be an $r$-dimensional $\fq$-subspace of $\fqm$. Then for $y\in \fqm$ we have
\begin{eqnarray*}
\sum_{x\in H}\zeta_{p}^{\Tr_{p}^{q^m}(yx)}=\left\{\begin{array}{ll}
q^r,    &   \mbox{ if } y\in H^{\bot};\\
0,   &   \mbox{ otherwise}.\\
\end{array} \right.
\end{eqnarray*}
\end{lem}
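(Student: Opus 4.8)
The plan is to recognize the sum as a character sum over the additive group $(H,+)$ and to apply a standard orthogonality argument, after first translating the definition of $H^{\bot}$ (which is phrased via the relative trace $\Tr_{q}^{q^m}$) into a condition on the absolute trace $\Tr_{p}^{q^m}$ that actually appears in the exponent.

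Fix $y\in\fqm$ and put $S:=\sum_{x\in H}\zeta_{p}^{\Tr_{p}^{q^m}(yx)}$. The crucial preliminary step is the equivalence
\[ y\in H^{\bot}\iff \Tr_{p}^{q^m}(yx)=0 \ \text{ for all } x\in H. \]
For the direction ``$\Rightarrow$'' I would use transitivity of the trace, $\Tr_{p}^{q^m}=\Tr_{p}^{q}\circ\Tr_{q}^{q^m}$: if $\Tr_{q}^{q^m}(yx)=0$ for all $x\in H$, then $\Tr_{p}^{q^m}(yx)=\Tr_{p}^{q}(0)=0$. For ``$\Leftarrow$'' I argue by contraposition: suppose $y\notin H^{\bot}$, so $\Tr_{q}^{q^m}(yx_{0})=c\neq 0$ for some $x_{0}\in H$. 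Since $H$ is an $\fq$-subspace, $\lambda x_{0}\in H$ for every $\lambda\in\fq$, and $\Tr_{q}^{q^m}(y\lambda x_{0})=\lambda c$ runs over all of $\fq$ as $\lambda$ does. Because $\Tr_{p}^{q}$ is a nonzero $\fp$-linear functional on $\fq$, there is some $\lambda$ with $\Tr_{p}^{q}(\lambda c)\neq 0$, and then $\Tr_{p}^{q^m}(y\lambda x_{0})=\Tr_{p}^{q}(\lambda c)\neq 0$ with $\lambda x_{0}\in H$, as needed.

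With this equivalence the two cases of the lemma follow immediately. If $y\in H^{\bot}$, then every summand equals $1$, so $S=|H|=q^{r}$. If $y\notin H^{\bot}$, choose $x_{0}\in H$ with $\zeta_{p}^{\Tr_{p}^{q^m}(yx_{0})}\neq 1$; replacing the summation index $x$ by $x+x_{0}$ (which merely permutes $H$) and using additivity of $\Tr_{p}^{q^m}$ gives $S=\zeta_{p}^{\Tr_{p}^{q^m}(yx_{0})}S$, hence $\bigl(1-\zeta_{p}^{\Tr_{p}^{q^m}(yx_{0})}\bigr)S=0$ and therefore $S=0$.

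I do not anticipate a genuine obstacle here; the only point requiring care is precisely the mismatch between the relative trace in the definition of $H^{\bot}$ and the absolute trace in the exponential sum, which is resolved by transitivity of the trace together with the surjectivity of $\Tr_{p}^{q}$ restricted to the $\fq$-line $\fq x_{0}\subseteq H$. Alternatively, one could phrase the final step by simply noting that $x\mapsto\zeta_{p}^{\Tr_{p}^{q^m}(yx)}$ is an additive character of $\fqm$ and invoking orthogonality of characters of the finite abelian group $(H,+)$, but the self-contained change-of-index computation above is equally short and avoids citing external machinery.
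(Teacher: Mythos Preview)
The paper does not supply its own proof of this lemma; it is stated with a citation to \cite{YLFL} and then used as a black box. Your argument is correct and self-contained, and the care you take in reconciling the relative trace $\Tr_{q}^{q^m}$ in the definition of $H^{\bot}$ with the absolute trace $\Tr_{p}^{q^m}$ in the exponent is exactly the point that needs attention.
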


%For the sake of simplity, denote $|u|=| \Supp(u)|$ for a vector $u$ in $\fq^m$.
%For two vectors $u=(u_{1},\ldots,u_{m})$ and $v=(v_{1},\ldots,v_{m})$ in $\fq^m$ with entries $0$ or $1$, define $u\cap v=(\min\{u_{1},v_{1}\}, \ldots, \min\{u_{m},v_{m}\})$ and $u\cup v=(\max\{u_{1},v_{1}\}, \ldots, \max\{u_{m},v_{m}\})$.

Let $0 \leq T<q^{m-1}$ be an integer. Then $T$ can be uniquely written as $T=\sum_{j=0}^{m-2}t_{j}q^{j}$, where $0\leq t_{j} \leq q-1$ is an integer for $0\leq j\leq m-2$. Let $v(T)$ (resp. $u(T)$) denote the smallest (resp. largest) integer in the set $\{0\leq j \leq m-2: t_{j}\ne 0\}$ and $\ell(T)=\sum_{j=0}^{m-1}t_{j}$.
%Let $\sigma(T)$ denote the integer $0\leq \sigma(T)\leq u(T)+1$ such that $t_{j}=q-1$ for $0\leq j \leq  \sigma(T)-1$ and $t_{\sigma(T)}<q-1$. Clearly, $\sigma(T)=0$ if $t_{0} \ne q-1$ or $v(T)>0$.
Then for any integer $0 \leq T<q^{m-1}$, we have the following result (see the binary case in \cite[Lemma IV.4]{JYHLL}).

\begin{lem} \label{g(m,d)}
Let $0 \leq T<q^{m-1}$ be an integer. Let notation be defined as above. Then we have
\[\sum_{i=0}^{m-1}\lceil \frac{q^{m-1}-T}{q^i} \rceil = \frac{1}{q-1}(q^m-1-qT+\ell(T))\]
and
\[\sum_{i=0}^{m-1}\lceil \frac{q^{m-1}-T+1}{q^i} \rceil = \frac{1}{q-1}(q^m-1-qT+\ell(T))+v(T)+1.\]
\end{lem}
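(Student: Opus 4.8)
The plan is to compute both sums directly from the base-$q$ expansion $T=\sum_{j=0}^{m-2}t_jq^j$ with $0\le t_j\le q-1$, evaluating each ceiling term $\lceil (q^{m-1}-T)/q^i\rceil$ individually and then telescoping. For the first identity, write $q^{m-1}-T = \sum_{j=0}^{m-1}s_jq^j$ where $s_{m-1}=1$ when $T>0$ and the lower digits are obtained by the usual subtraction-with-borrows; but rather than track borrows, the cleaner route is to observe that for each $i$ we have
\[
\left\lceil \frac{q^{m-1}-T}{q^i}\right\rceil = q^{m-1-i} - \left\lfloor \frac{T}{q^i}\right\rfloor - \varepsilon_i,
\]
where $\varepsilon_i\in\{0,1\}$ equals $1$ precisely when $q^i \nmid T$ (because $\lceil a/b\rceil = \lfloor a/b\rfloor + 1$ when $b\nmid a$, and here $q^{m-1}-T \equiv -T \pmod{q^i}$). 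Summing over $0\le i\le m-1$ gives $\sum_i q^{m-1-i} = (q^m-1)/(q-1)$ for the first piece, $\sum_{i=0}^{m-1}\lfloor T/q^i\rfloor$ for the second, and $\#\{0\le i\le m-1: q^i\nmid T\}$ for the third.

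The key combinatorial lemma to isolate is the identity $\sum_{i\ge 0}\lfloor T/q^i\rfloor = (qT-\ell(T))/(q-1)$, which follows from $\lfloor T/q^i\rfloor = \sum_{j\ge i} t_j q^{j-i}$ and swapping the order of summation: $\sum_{i=0}^{m-1}\sum_{j\ge i}t_jq^{j-i} = \sum_j t_j \sum_{i=0}^{j} q^{j-i} = \sum_j t_j \frac{q^{j+1}-1}{q-1} = \frac{1}{q-1}(qT-\ell(T))$, using $\sum_j t_j q^{j+1} = qT$ and $\sum_j t_j = \ell(T)$ (note $t_{m-1}=0$ so the bound $j\le m-2$ is harmless). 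Meanwhile $\#\{0\le i\le m-1: q^i\nmid T\} = m-1 - v(T)$ when $T>0$, since $q^i\mid T$ exactly for $0\le i\le v(T)$, and for $i=m-1$ we have $q^{m-1}\nmid T$ as $T<q^{m-1}$ (and one checks the formula degenerates correctly at $T=0$, where both sides are $\frac{q^m-1}{q-1}$). Assembling: $(q^m-1)/(q-1) - (qT-\ell(T))/(q-1) - (m-1-v(T))$; after combining the first two fractions this must be reconciled with the claimed $\frac{1}{q-1}(q^m-1-qT+\ell(T))$, which forces a recheck of the third term — in fact the correct count of indices with a borrow telescopes against the $q^{m-1-i}$ sum more subtly, so the honest computation is to expand $q^{m-1}-T$ in base $q$ and sum $\lceil\cdot\rceil = \sum_{j\ge i}s_jq^{j-i}$ term by term, which reproduces $\frac{1}{q-1}(q^m-1) - $ (correction) cleanly; I expect this digit-bookkeeping to be the main obstacle, as the borrow structure of $q^{m-1}-T$ is where sign errors hide.

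For the second identity, note $q^{m-1}-T+1$ shares all base-$q$ digits with $q^{m-1}-T$ except at the low end; concretely $\lceil (q^{m-1}-T+1)/q^i\rceil = \lceil (q^{m-1}-T)/q^i\rceil$ for $i\le v(T)$ (here $q^i\mid T$, and adding $1$ bumps a zero residue to $1$, so the ceiling increases by $1$ — wait, that needs care), while for $i>v(T)$ the value $q^{m-1}-T+1$ may or may not change the ceiling depending on whether $q^i \mid (q^{m-1}-T)$. The efficient argument is the difference $\sum_i \lceil (q^{m-1}-T+1)/q^i\rceil - \sum_i\lceil(q^{m-1}-T)/q^i\rceil = \#\{0\le i \le m-1: q^i \mid (q^{m-1}-T)\}$, since incrementing the numerator by $1$ raises $\lceil\cdot\rceil$ by exactly $1$ precisely when the old numerator was divisible by $q^i$. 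Now $q^i\mid (q^{m-1}-T) \iff q^i\mid T$ (for $i\le m-1$), which holds for $0\le i\le v(T)$, giving a count of $v(T)+1$. Adding this to the first identity yields the second. The one subtlety to verify is the edge case $T=0$: then $q^{m-1}-T+1 = q^{m-1}+1$, every $q^i\mid q^{m-1}$ for $i\le m-1$, so the increment is $m$, and indeed $v(0)+1$ should be interpreted as $m$ (or the statement tacitly assumes $T\ge 1$, in which case $v(T)\le m-2$ and the clean count $v(T)+1$ applies); I would state the lemma for $T\ge 1$ or note the convention, mirroring \cite[Lemma IV.4]{JYHLL}.
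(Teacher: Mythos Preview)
Your approach is sound and in fact cleaner than the paper's, but you have inserted a spurious correction term that derails you. For any integer $n$ and real $x$ one has $\lceil n-x\rceil = n - \lfloor x\rfloor$; applying this with $n=q^{m-1-i}$ and $x=T/q^i$ (valid since $q^i\mid q^{m-1}$ for $0\le i\le m-1$) gives
\[
\left\lceil \frac{q^{m-1}-T}{q^i}\right\rceil \;=\; q^{m-1-i} - \left\lfloor \frac{T}{q^i}\right\rfloor
\]
with \emph{no} $\varepsilon_i$ at all. Your key identity $\sum_{i\ge 0}\lfloor T/q^i\rfloor = (qT-\ell(T))/(q-1)$ is correct (and is the same swap of summation the paper uses), so summing yields $\frac{q^m-1}{q-1} - \frac{qT-\ell(T)}{q-1}$, which is precisely the first assertion. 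There is no borrow bookkeeping to do, no third term to reconcile, and no ``honest computation'' to fall back on; the paragraph where you second-guess yourself is reacting to an error of your own making.

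For the second identity your increment argument is correct and is sharper than the paper's treatment: the paper merely says the second assertion ``can be similarly proved'' and omits it, whereas you reduce it to the count $\#\{0\le i\le m-1: q^i\mid(q^{m-1}-T)\}=\#\{0\le i\le m-1: q^i\mid T\}=v(T)+1$. For the first identity, by contrast, the paper takes a longer route: it splits the range of $i$ into $0\le i\le v(T)$, $v(T)<i\le u(T)$, and $u(T)<i\le m-1$, writes out the ceiling in each range, bounds the residual fractional parts $\lceil -\sum_{j<i} t_jq^j/q^i\rceil$ to zero, and only then swaps the order of summation. Your method avoids the case split entirely and reaches the same double sum in one line; the paper's method makes the digit structure more visible but at the cost of more bookkeeping.
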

\begin{proof}
We first write $T$ as $T=\sum_{j=0}^{m-2}t_{j}q^{j}=\sum_{j=v(T)}^{u(T)}t_{j}q^{j}$, where $0\leq t_{j} \leq q-1$ is an integer for $0\leq j\leq m-2$.
A direct computation gives
\begin{align*}
\lceil \frac{q^{m-1}-T}{q^i} \rceil=\left\{\begin{array}{ll}
q^{m-i-1}-\sum_{j=v(T)}^{u(T)}t_{j}q^{j-i},& \mbox{if $0\leq i \leq v(T)$};\\
q^{m-i-1}-\sum_{j=i}^{u(T)}t_{j}q^{j-i}+\lceil \frac{-\sum_{j=v(T)}^{i-1}t_{j}q^{j}}{q^i} \rceil,& \mbox{if $v(T) < i \leq u(T)$};\\
q^{m-i-1}+\lceil \frac{-\sum_{j=v(T)}^{u(T)}t_{j}q^{j}}{q^i} \rceil,& \mbox{if $u(T) < i \leq m-1$}.
\end{array} \right.
\end{align*}
Notice that $\sum_{j=v(T)}^{i-1}t_{j}q^{j}\leq (q-1)(q^{v(T)}+\cdots+q^{i-1})=q^i-q^{v(T)}$ for $v(T) < i \leq u(T)$ and
$\sum_{j=v(T)}^{u(T)}t_{j}q^{j} \leq (q-1)(q^{v(T)}+\cdots+q^{u(T)})=q^{u(T)+1}-q^{v(T)}$ for $u(T) < i \leq m-1$, which implies that $\lceil \frac{-\sum_{j=v(T)}^{i-1}t_{j}q^{j}}{q^i} \rceil=0$ for $v(T) < i \leq u(T)$ and $\lceil \frac{-\sum_{j=v(T)}^{u(T)}t_{j}q^{j}}{q^i} \rceil=0$ for $u(T) < i \leq m-1$ respectively.
Thus we have
\begin{align*}
\sum_{i=0}^{m-1}\lceil \frac{q^{m-1}-T}{q^i} \rceil &=\sum_{i=0}^{m-1}q^{m-i-1}-\sum_{i=0}^{v(T)}\sum_{j=v(T)}^{u(T)}t_{j}q^{j-i}
-\sum_{i=v(T)+1}^{u(T)}\sum_{j=i}^{u(T)}t_{j}q^{j-i}\\
&=\frac{q^m-1}{q-1}-\sum_{i=0}^{u(T)}\sum_{j=i}^{u(T)}t_{j}q^{j-i}
=\frac{q^m-1}{q-1}-\sum_{j=0}^{u(T)}t_{j}\sum_{i=0}^{j}q^{i}\\
&=\frac{q^m-1}{q-1}-\sum_{j=0}^{u(T)}t_{j}\frac{q^{j+1}-1}{q-1}=\frac{1}{q-1}(q^m-1-qT+\ell(T)).
\end{align*}

The second assertion can be similarly proved and thus we omit the details of its proof here.
%A direct computation gives
%\begin{align*}
%\lceil \frac{q^{m-1}-T+1}{q^i} \rceil=\left\{\begin{array}{ll}
%q^{m-i-1}-\sum_{j=v(T)}^{u(T)}t_{j}q^{j-i}+1,  & \mbox{if $0\leq i \leq v(T)$};\\
%q^{m-i-1}-\sum_{j=i}^{u(T)}t_{j}q^{j-i}+\lceil \frac{1-\sum_{j=v(T)}^{i-1}t_{j}q^{j}}{q^i} \rceil,& \mbox{if $v(T) < i \leq u(T)$};\\
%q^{m-i-1}+\lceil \frac{1-\sum_{j=v(T)}^{u(T)}t_{j}q^{j}}{q^i} \rceil,& \mbox{if $u(T) < i \leq m-1$}.
%\end{array} \right.
%\end{align*}
 This completes the proof.
\end{proof}

%\begin{lem}(\cite{HLZWT})\label{near-Griesmer}
%Let $\C$ be an $[n,k,d]$ near Griesmer code over $\fq$ with $k>1$. Then $\C$ is distance-optimal if $q\mid d$ and almost distance-optimal if $q\nmid d$.
%\end{lem}

\section{The projective linear codes over $\fq$ from the general simplicial complexes} \label{sec3}
Let $\Delta$ be a simplicial complex of $\fqm$, and $\Delta^{c}$ be the complement of $\Delta$, namely, $\Delta^{c}=\fq^m \backslash \Delta$. Notice that if $x\in \Delta$, then $yx\in \Delta$ for any $y\in \fq^*$ due to the definition of simplicial complexes. Hence for any simplicial complex $\Delta$, $\Delta^{c}$ can be expressed as
\[\Delta^{c}=\fq^{*}\overline{\Delta}^{c}=\{yz: y \in \fq^{*} \mbox{ and } z \in \overline{\Delta}^{c}\}\]
where $z_{i}/z_{j} \notin \fq^{*}$ for distinct elements $z_{i}$ and $z_{j}$ in $\overline{\Delta}^{c}$, and clearly $|\overline{\Delta}^{c}|=|\Delta^{c}|/(q-1)$. This defines $\overline{\Delta}^{c}$ and $\overline{\Delta}^{c}$ can be viewed as a subset of $\PG(m-1,q)$.
%in which any two distinct elements in $\overline{\Delta}^{c}$ are linearly independent over $\fq$.

In this section, we investigate the projective codes $\C_{\overline{\Delta}^{c}}$ defined as in \eqref{CD}. Even though  $\C_{\overline{\Delta}^{c}}$ can also be extended to the nonprojective codes $\C_{\Delta^{c}}$, we restrict our discussion to the projective case in this paper.

\begin{thm} \label{optimalcode-anti-sc-pro}
Let $\Delta$ be a simplicial complex of $\fqm$ with the support $\mathcal{A}=\{A_{1},A_{2},\ldots,A_{h}\}$, where $1 \leq |A_{1}| \leq |A_{2}| \leq \cdots \leq |A_{h}|<m$. Assume that $A_{i}\backslash (\cup_{1\leq j \leq h, j\ne i}A_{j}) \ne \emptyset$ for any $1 \leq i \leq h$ and
$q^{m} > \sum_{1\leq i\leq h} q^{|A_{i}|}$. Denote $T=\sum_{1\leq i\leq h} q^{|A_{i}|-1}$. Let $\C_{\overline{\Delta}^{c}}$ be defined as in \eqref{CD}. Then
\begin{enumerate}
\item [1)] $\C_{\overline{\Delta}^{c}}$ has parameters $[(q^m-|\Delta|)/(q-1),m,q^{m-1}-T]$, where $|\Delta|=\sum_{\emptyset \not=S\subseteq \mathcal{A}}(-1)^{|S|-1}q^{|\cap S|}$ and $\cap S$ is defined as $\cap S=\cap_{A\in S}A$.% is given in \eqref{Delta-card}.
\item [2)] $\C_{\overline{\Delta}^{c}}$ is a Griesmer code if and only if $|A_{i}\cap A_{j}|=0$ for any $1\leq i<j \leq h$ and at most $q-1$ of $|A_{i}|$'s are the same.
\item [3)] $\C_{\overline{\Delta}^{c}}$ is distance-optimal if $|\Delta|-1+(q-1)(v(T)+1)>qT-\ell(T)$.
\item [4)] $\C_{\overline{\Delta}^{c}}$  has the following weight enumerator %is at most $2^{2^h}$-weight with
%      \[\sum_{R\in \Omega} \Big(q^{m-|\cup_{L\in \Omega\backslash R}(\cap L)|}-
%        \sum_{\emptyset \not=S\subseteq R}(-1)^{|S|-1}q^{m-|(\cap S) \cup (\cup_{L\in \Omega\backslash R}(\cap L))| }\Big)
%        z^{q^{m-1}-\sum_{S\in R}q^{|\cap S|-1}}\]
      \[\sum_{\emptyset \not=R\subseteq \Omega} |\Psi_{R}|
        z^{q^{m-1}-\sum_{S\in R}(-1)^{|S|-1}q^{|\cap S|-1}}+(q^{m-|\cup_{i=1}^{h}A_{i}|}-1)z^{q^{m-1}}+1\]
      where $\Omega=\{S: S \subseteq \mathcal{A}, S\ne \emptyset \}$ and
      \[|\Psi_{R}|=q^{m-|\cup_{S\in \Omega \setminus R}(\cap S)|}-\sum_{\emptyset \not=E\subseteq R}(-1)^{|E|-1}q^{m-|(\cup_{L\in E}(\cap L)) \cup (\cup_{S\in \Omega \setminus R}(\cap S))|}.\]
\end{enumerate}
\end{thm}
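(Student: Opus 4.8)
The plan is to compute, for each fixed nonzero $a \in \fqm$, the Hamming weight $\mathrm{wt}(c_a)$ of the codeword $c_a = (\Tr_q^{q^m}(ax))_{x \in \overline{\Delta}^c}$, and then count how many $a$ give each weight value. Since $\overline{\Delta}^c$ is a set of projective points with $\Delta^c = \fq^* \overline{\Delta}^c$, it is cleaner to work over $\Delta^c$ and divide by $q-1$ at the end: writing $N(a) = \#\{x \in \Delta^c : \Tr_q^{q^m}(ax) = 0\}$, one has $(q-1)(|\overline{\Delta}^c| - \mathrm{wt}(c_a)) = N(a) - 1$ (the $-1$ removing $x=0$, which lies in $\Delta$, not $\Delta^c$, so actually $N(a)$ counts exactly $x\in\Delta^c$ with the trace vanishing and we get $(q-1)\mathrm{wt}(c_a) = |\Delta^c| - N(a)$). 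First I would express $N(a)$ via the additive characters: using Lemma~\ref{lem-expsum-H} in the form $\sum_{t \in \fq}\chi(t\,\Tr_q^{q^m}(ax)) = q\cdot[\Tr_q^{q^m}(ax)=0]$, and then $|\Delta^c| = q^m - |\Delta|$ with $|\Delta|$ given by inclusion–exclusion over $\mathcal{A}$ as in part~1). The key identity I need is, for any subset $B \subseteq [m]$ and the corresponding coordinate subspace $H_B = \mathbf{span}\{\alpha_i : i \in B\}$, the sum $\sum_{x \in H_B}\chi(\Tr_q^{q^m}(axt))$ equals $q^{|B|}$ if $at \in H_B^\perp$ and $0$ otherwise; this is exactly Lemma~\ref{lem-expsum-H}.

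The main combinatorial step is to handle the union structure of $\Delta$. Since $\Delta = \bigcup_{i=1}^h H_{A_i}$, for any $a$ the quantity $\sum_{x\in\Delta}\chi(\Tr(ax))$ is, by inclusion–exclusion, $\sum_{\emptyset\neq S\subseteq\mathcal{A}}(-1)^{|S|-1}\sum_{x \in H_{\cap S}}\chi(\Tr(ax))$, and each inner sum is $q^{|\cap S|}$ or $0$ according to whether $a \in H_{\cap S}^\perp$. So I would partition the nonzero $a$'s according to the set $R(a) := \{S \in \Omega : a \in H_{\cap S}^\perp\}$ of "active" intersections. The crucial observation is that $R(a)$ is always either empty or an up-set/down-set type object — actually it is a down-set under inclusion turned around: if $S \subseteq S'$ then $\cap S \supseteq \cap S'$, hence $H_{\cap S} \subseteq H_{\cap S'}$ (wait — $\cap S \supseteq \cap S'$ gives $H_{\cap S}\supseteq H_{\cap S'}$, so $H_{\cap S}^\perp \subseteq H_{\cap S'}^\perp$), so $S \in R(a)$ forces $S' \in R(a)$; thus $R = R(a)$ is an up-set in $\Omega$. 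For each up-set $R$, the weight of $c_a$ is the constant $q^{m-1} - \sum_{S \in R}(-1)^{|S|-1}q^{|\cap S|-1}$ after dividing the character computation by $q(q-1)$, and the multiplicity $|\Psi_R|$ is the number of $a \in \fqm$ with $R(a) = R$ exactly. Computing $|\Psi_R|$ is itself an inclusion–exclusion: the set of $a$ with $R \subseteq R(a)$ is $\bigcap_{S \in R} H_{\cap S}^\perp = \big(\sum_{S\in R}H_{\cap S}\big)^\perp = H_{\bigcup_{S\in R}(\cap S)}^\perp$, which has size $q^{m - |\bigcup_{S\in R}(\cap S)|}$; then I intersect with the complement of $\bigcup_{S\in\Omega\setminus R}H_{\cap S}^\perp$ and apply inclusion–exclusion over subsets $E$ of $\Omega\setminus R$ — wait, the stated formula sums over $E \subseteq R$, so I must be careful about which direction the "must be excluded" sets go; re-examining, the complement condition is that $a \notin H_{\cap S}^\perp$ for $S \in \Omega\setminus R$, and since $R$ is forced to be an up-set, the $S \notin R$ that can "accidentally" be satisfied are handled by a dual inclusion–exclusion whose index set, after simplification using $H_{\cap S_1}^\perp \cap H_{\cap S_2}^\perp = H_{(\cap S_1)\cup(\cap S_2)}^\perp$, collapses to the form in the statement. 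I would verify the index bookkeeping carefully here — matching my derived formula to $|\Psi_R| = q^{m-|\cup_{S\in\Omega\setminus R}(\cap S)|} - \sum_{\emptyset\neq E\subseteq R}(-1)^{|E|-1}q^{m-|(\cup_{L\in E}(\cap L))\cup(\cup_{S\in\Omega\setminus R}(\cap S))|}$ is the delicate part.

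Finally, the term $(q^{m-|\cup_{i=1}^h A_i|}-1)z^{q^{m-1}}$ accounts for the nonzero $a$ with $R(a) = \emptyset$, i.e., those $a$ outside every $H_{\cap S}^\perp$ — but actually these are the $a \in (H_{\cup A_i})^\perp \setminus\{0\}$ (so that $\sum_{x\in\Delta}\chi(\Tr(ax)) = |\Delta|$ and thus $c_a$ has full weight $q^{m-1}$), and there are $q^{m - |\cup_i A_i|} - 1$ of them; all remaining nonzero $a$ are distributed among the nonempty up-sets $R$. One checks that every nonzero $a$ is accounted for, which also establishes that $G$ has rank $m$ (length and dimension as in part~1), giving the minimum distance $q^{m-1} - T$ by minimizing over $R$ — the minimum weight is attained when $R = \Omega$, where $\sum_{S\in\Omega}(-1)^{|S|-1}q^{|\cap S|-1} = |\Delta|/q$... no: one sees directly that for $R$ equal to the principal up-set generated by a single singleton $\{A_i\}$ of maximal size, one gets weight $q^{m-1} - q^{|A_i|-1}$, and summing the effect over all $i$ (which happens for the $a$ lying in $\bigcap_i H_{A_i}^\perp$, the "smallest" stratum) gives $q^{m-1} - T$; I would confirm $T = \sum_i q^{|A_i|-1}$ is indeed the maximum of $\sum_{S\in R}(-1)^{|S|-1}q^{|\cap S|-1}$ over up-sets $R$ by a short monotonicity argument. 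The main obstacle I anticipate is purely the combinatorial index-chasing in $|\Psi_R|$: getting the signs and the union-of-intersections exponents to match the stated closed form, rather than any analytic difficulty, since all the character sums are evaluated cleanly by Lemma~\ref{lem-expsum-H}.
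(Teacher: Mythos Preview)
Your overall strategy---character sums via Lemma~\ref{lem-expsum-H}, inclusion--exclusion over $\mathcal{A}$ to handle $\Delta = \bigcup_i H_{A_i}$, and stratifying the nonzero $a$'s by which subspaces $H_{\cap S}^\perp$ they fall into---is exactly the paper's approach for parts 1) and 4). But two things need fixing.

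\textbf{First, your indexing of $R$ is reversed.} You define $R(a) = \{S : a \in H_{\cap S}^\perp\}$, which is an up-set; but the theorem's formula uses $R = \{S : a \notin H_{\cap S}^\perp\}$, which is a down-set (if $S_1 \subseteq S_2$ and $a \notin H_{\cap S_2}^\perp$ then $a \notin H_{\cap S_1}^\perp$). With the paper's convention the weight is $q^{m-1} - \sum_{S\in R}(-1)^{|S|-1}q^{|\cap S|-1}$, the ``$R=\emptyset$'' stratum is precisely $\{a : a \in H_{A_i}^\perp \text{ for all } i\} = H_{\cup_i A_i}^\perp$, and the stated $|\Psi_R|$ formula (inclusion--exclusion over $E\subseteq R$, not $E\subseteq \Omega\setminus R$) then falls out directly. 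Your self-corrections (``wait\ldots'', ``but actually\ldots'') show you sensed this; just commit to the correct convention from the start.

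\textbf{Second, and more seriously, the minimum-distance step is not ``a short monotonicity argument.''} You need to show that $\max_R \sum_{S\in R}(-1)^{|S|-1}q^{|\cap S|-1} = T$ and that this maximum is attained. The paper's proof does real work here: it rewrites $\sum_{\emptyset\neq S\subseteq\mathcal{A}}(-1)^{|S|-1}f_a(S) = \sum_{i=1}^h f_a(\{A_i\}) - \sum_{i=2}^h \sum_{\emptyset\neq S\subseteq \mathcal{A}_i}(-1)^{|S|-1}f_a(S)$ via induction, then interprets each inner sum as the Hamming weight of a codeword in an auxiliary code $\C_{\Delta_{\mathcal{A}_i}}$ to get nonnegativity, yielding the bound $\leq (q-1)T$. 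For attainability, the paper explicitly constructs an $a$ with $a\notin H_{A_i}^\perp$ for every $i$ yet $a\in H_{\cap S}^\perp$ for every $|S|\geq 2$; this is exactly where the hypothesis $A_i\setminus(\bigcup_{j\neq i}A_j)\neq\emptyset$ enters, and you never invoke it. Without that hypothesis the stratum achieving weight $q^{m-1}-T$ can be empty.

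Finally, you do not address parts 2) and 3) at all; these use Lemma~\ref{g(m,d)} to compute $g(m,d)$ and $g(m,d+1)$ and compare with $n=(q^m-|\Delta|)/(q-1)$, together with a further rewriting of $|\Delta|$ analogous to the one above.
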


\begin{proof}
1).
%Note that $\C_{\overline{\Delta}^{c}}$ has parameters $[n,k,d]$ if and only if $\C_{\Delta^{c}}$ has parameters $[(q-1)n,k,(q-1)d]$. Here we determine the parameters of $\C_{\Delta^{c}}$ for the sake of convenience of computation, and consequently the the parameters of $\C_{\overline{\Delta}^{c}}$ follows directly.
%Recall that $\Delta^{c}=\fq^m \backslash \Delta$ and the support of the simplicial complex $\Delta$ is $\mathcal{A}=\{A_{1},A_{2},\ldots,A_{h}\}$, where $A_{i} \subseteq [m]$ for $1\leq i \leq h$.
Let $\Delta_{A_{i}}:=\langle \{A_{i}\}\rangle$ be the simplicial complex of $\fq^m$ with exactly one maximal element for $1\leq i \leq h$.
Clearly, one has $\Delta=\cup_{i=1}^{h}\Delta_{A_{i}}$.
Notice that $\Delta_{A_{i}}$ is indeed a $|A_{i}|$-dimensional $\fq$-subspace of $\fq^m$ for any $1\leq i \leq h$, which implies $|\Delta_{A_{i}}|=q^{|A_{i}|}$. Let $S=\{A_{i}: i \in J\}$ be a nonempty subset of $\mathcal{A}$ for  $J\subseteq [h]:=\{1,2,\ldots,h\}$ and $J\ne \emptyset$. Then $\cap_{i\in J}\Delta_{A_{i}}=\Delta_{\cap S}$, where $\cap S= \cap_{A\in S}A \subseteq [m]$ and $\Delta_{\cap S}$ denotes the simplicial complex with exactly one maximal element generated by $\{\cap S\}$. This implies that $|\cap_{i\in J}\Delta_{A_{i}}|=|\Delta_{\cap S}|=q^{|\cap S|}$.

Using the principle of inclusion-exclusion, it follows that
\begin{eqnarray}\label{Delta-card}
|\Delta|=|\cup_{i=1}^{h}\Delta_{A_{i}}|=\sum_{1\leq i\leq h} |\Delta_{A_{i}}|
-\sum_{1\leq i<j\leq h}|\Delta_{A_{i}}\cap \Delta_{A_{j}}|
+\cdots+(-1)^{h-1} |\cap_{i=1}^{h}\Delta_{A_{i}}|
=\sum_{\emptyset \not=S\subseteq \mathcal{A}}(-1)^{|S|-1}q^{|\cap S|}.
\end{eqnarray}
Then the length of $\C_{\overline{\Delta}^{c}}$ is $n=|\overline{\Delta}^{c}|=|\Delta^{c}|/(q-1)=(q^m-|\Delta|)/(q-1)$.

Now we compute the possible Hamming weights $wt(c_{a})$ of the codewords $c_{a}$ in $\C_{\overline{\Delta}^{c}}$. If $a=0$, we immediately have $wt(c_{a})=0$. For $a\in \fqm^{*}$, one has $wt(c_{a})=n-N_{a}$, where $N_{a}=|\{x\in \overline{\Delta}^{c}: \Tr_{q}^{q^{m}}(ax) = 0\}|=\frac{1}{q-1}|\{x\in \Delta^{c}: \Tr_{q}^{q^{m}}(ax) = 0\}|$.
Using the orthogonal property of nontrivial additive characters, for $a\in \fqm^*$, it gives
\begin{align*}
N_{a}=&\frac{1}{q(q-1)}\sum_{x\in \Delta^{c}}
\sum_{u\in \fq} \chi(u\Tr_{q}^{q^{m}}(ax))
=\frac{1}{q-1}(q^{m-1}-\frac{1}{q}\sum_{x\in \Delta}\sum_{u\in \fq} \chi(u\Tr_{q}^{q^{m}}(ax)))
=\frac{1}{q-1}(q^{m-1}-\Theta),
\end{align*}
where $\Theta:=\frac{1}{q}\sum_{x\in \Delta} \sum_{u\in \fq} \chi(u\Tr_{q}^{q^{m}}(ax))$. Then for $a\in \fqm^{*}$ it leads to
\begin{equation} \label{eq-thm1-wt(a)}
wt(c_{a})=q^{m-1}-(|\Delta|-\Theta)/(q-1).
\end{equation}

To determine the minimum distance of $\C_{\overline{\Delta}^{c}}$, it is equivalent to determining the maximum value of $|\Delta|-\Theta$ as $a$ runs through $\fqm^{*}$. For a simplicial complex $\Delta_{A}=\langle \{A\}\rangle$ with exactly one maximal element, where $A\subseteq [m]$, we define
\[\Phi(\Delta_{A}):= \frac{1}{q}\sum_{u\in \fq} \sum_{x\in \Delta_{A}}\chi(u\Tr_{q}^{q^{m}}(ax)).\]
Since $\Delta_{A}$ can be viewed as a $|A|$-dimensional $\fq$-subspace of $\fqm$, it follows from Lemma \ref{lem-expsum-H} that
\begin{eqnarray} \label{eq-thm1-Phi}
\Phi(\Delta_{A})=\left\{\begin{array}{ll}
q^{|A|},     &   \mbox{ if } a\in \Delta_{A}^{\bot};\\
q^{|A|-1},  &   \mbox{ otherwise},
\end{array} \right.
\end{eqnarray}
where $\Delta_{A}^{\bot}$ is the dual of $\Delta_{A}$. In particular, $\Delta_{A}=\{0\} \subseteq \fqm$ and $\Phi(\Delta_{A})=1$ if $A=\emptyset$.
Again using the principle of inclusion-exclusion gives
\begin{equation*}
\Theta =\sum_{1\leq i\leq h} \Phi(\Delta_{A_{i}})
-\sum_{1\leq i<j\leq h}\Phi(\Delta_{A_{i}}\cap \Delta_{A_{j}})
+\cdots+(-1)^{h-1} \Phi(\cap_{i=1}^{h}\Delta_{A_{i}})
=\sum_{\emptyset \not=S\subseteq \mathcal{A}}(-1)^{|S|-1}\Phi(\Delta_{\cap S}),
\end{equation*}
where $\Delta_{\cap S}=\langle \{\cap S\} \rangle$. This together with \eqref{Delta-card} leads to
\begin{equation} \label{eq-thm1-DeltaOmega-1}
|\Delta|-\Theta=\sum_{\emptyset \not=S\subseteq \mathcal{A}}(-1)^{|S|-1}(q^{|\cap S|}-\Phi(\Delta_{\cap S}))
=\sum_{\emptyset \not=S\subseteq \mathcal{A}}(-1)^{|S|-1}f_{a}(S),
\end{equation}
where $f_{a}(S):=q^{|\cap S|}-\Phi(\Delta_{\cap S})$. Next we express $|\Delta|-\Theta$ in another way.
Define $\mathcal{A}_{i}:=\{A_{1}\cap A_{i},A_{2}\cap A_{i},\ldots, A_{i-1}\cap A_{i}\}$ for $2\leq i \leq h$.
From \eqref{eq-thm1-DeltaOmega-1}, one can further derive
\begin{equation} \label{eq-thm1-DeltaOmega-2}
|\Delta|-\Theta=\sum_{\emptyset \not=S\subseteq \mathcal{A}}(-1)^{|S|-1}f_{a}(S)
=\sum_{i=1}^{h}f_{a}(\{A_{i}\})-\sum_{i=2}^{h}\sum_{\emptyset \not=S\subseteq \mathcal{A}_{i}}(-1)^{|S|-1}f_{a}(S),
\end{equation}
where the second equality can be verified by using the mathematical induction as follows:\\
\indent $(a).$ For $h=2$, it can be readily verified that
\[\sum_{\emptyset \not=S\subseteq \{A_{1},A_{2}\}}(-1)^{|S|-1}f_{a}(S)=\sum_{i=1}^{2}f_{a}(\{A_{i}\})-\sum_{\emptyset \not=S\subseteq \mathcal{A}_{2}}(-1)^{|S|-1}f_{a}(S).\]
\indent $(b).$ Suppose that \eqref{eq-thm1-DeltaOmega-2} holds for $h=s$ where $s \geq 2$ is an integer. Then we have
\begin{align*}
&\sum_{\emptyset \not=S\subseteq \{A_{1},A_{2},\ldots,A_{s+1}\}}(-1)^{|S|-1}f_{a}(S)\\
=&\sum_{\emptyset \not=S\subseteq \{A_{1},A_{2},\ldots,A_{s}\}}(-1)^{|S|-1}f_{a}(S)+f_{a}(\{A_{s+1}\})-\sum_{\emptyset \not=S\subseteq \mathcal{A}_{s+1}}(-1)^{|S|-1}f_{a}(S)\\
=&\sum_{i=1}^{s+1}f_{a}(\{A_{i}\})-\sum_{i=2}^{s+1}\sum_{\emptyset \not=S\subseteq \mathcal{A}_{i}}(-1)^{|S|-1}f_{a}(S),
\end{align*}
which implies that \eqref{eq-thm1-DeltaOmega-2} also holds for $h=s+1$. This proves \eqref{eq-thm1-DeltaOmega-2}.

Now we claim that
\begin{equation} \label{eq-thm1-Fi}
\sum_{\emptyset \not=S\subseteq \mathcal{A}_{i}}(-1)^{|S|-1}f_{a}(S)\geq0
\end{equation}
for any $2\leq i \leq h$.
Let the linear code $\C_{\Delta_{\mathcal{A}_{i}}}$ be defined by $\C_{\Delta_{\mathcal{A}_{i}}}=\{\tilde{c}_{a}=(\Tr_{q}^{q^{m}}(ax))_{x\in \Delta_{\mathcal{A}_{i}}}: a\in \fqm\}$ where $\Delta_{\mathcal{A}_{i}}:=\langle\mathcal{A}_{i}\rangle$ is the simplicial complex generated by the set $\mathcal{A}_{i}$.
For $a\in \fqm^*$, the Hamming weight of the codeword $\tilde{c}_{a}$ in $\C_{\Delta_{\mathcal{A}_{i}}}$ is \[wt(\tilde{c}_{a})=|\Delta_{\mathcal{A}_{i}}|-|\{x\in \Delta_{\mathcal{A}_{i}}: \Tr_{q}^{q^{m}}(ax) = 0 \}|=|\Delta_{\mathcal{A}_{i}}|-\frac{1}{q}\sum_{x\in \Delta_{\mathcal{A}_{i}}}\sum_{u\in \fq} \chi(u\Tr_{q}^{q^{m}}(ax)).\]
Similar to the computation of \eqref{eq-thm1-DeltaOmega-1}, we have
\[wt(\tilde{c}_{a})=\sum_{\emptyset \not=S\subseteq \mathcal{A}_{i}}(-1)^{|S|-1}f_{a}(S)\geq 0.\]
This proves that \eqref{eq-thm1-Fi} holds.

Therefore, by \eqref{eq-thm1-DeltaOmega-2}, \eqref{eq-thm1-Fi} and \eqref{eq-thm1-Phi}, we conclude that
\begin{equation} \label{eq-thm1-DeltaOmega-3}
|\Delta|-\Theta \leq \sum_{i=1}^{h}f_{a}(\{A_{i}\}) \leq (q-1)\sum_{1\leq i\leq h} q^{|A_{i}|-1}.
\end{equation}

Next we show that there exists some $a\in \fqm^*$ such that $|\Delta|-\Theta=(q-1)\sum_{1\leq i\leq h} q^{|A_{i}|-1}$ under the assumption that $A_{i}\setminus (\cup_{1\leq j\leq h, j\ne i}A_{j})\ne \emptyset$ for any $1 \leq i \leq h$.
Notice that by \eqref{eq-thm1-DeltaOmega-1}, $|\Delta|-\Theta=\sum_{i=1}^{h}f_{a}(\{A_{i}\})$ if $\Phi(\Delta_{\cap S})=q^{|\cap S|}$ for any $S\subseteq \mathcal{A}$ with $|S|\geq 2$, that is, $a\in \Delta_{\cap S}^{\bot}$ for any $S\subseteq \mathcal{A}$ with $|S|\geq 2$ due to \eqref{eq-thm1-Phi}. Moreover, by \eqref{eq-thm1-Phi}, $\sum_{i=1}^{h}f_{a}(\{A_{i}\}) = (q-1)\sum_{1\leq i\leq h} q^{|A_{i}|-1}$ if $a\notin \Delta_{A_{i}}^{\bot}$ for any $1\leq i \leq h$. Therefore, one has  $|\Delta|-\Theta=(q-1)\sum_{1\leq i\leq h} q^{|A_{i}|-1}$ if
$a\notin \Delta_{A_{i}}^{\bot}$ for any $1\leq i \leq h$ and $a\in \Delta_{\cap S}^{\bot}$ for any $S\subseteq \mathcal{A}$ with $|S|\geq 2$.
Recall that $\{\alpha_{1},\ldots,\alpha_{m}\}$ is a basis of $\fqm$ over $\fq$ and $\{\beta_{1},\ldots,\beta_{m}\}$ is its dual basis.
Then the subspaces $\Delta_{A_{i}}$ and $\Delta_{\cap S}$ of $\fqm$ can be expressed as $\Delta_{A_{i}}=\mbox{\textbf{span}}\{\alpha_{j}: j \in A_{i}\}$ and
$\Delta_{\cap S}=\mbox{\textbf{span}}\{\alpha_{j}: j \in \cap S\}$ respectively.
Thus $\Delta_{A_{i}}^{\bot}=\mbox{\textbf{span}}\{\beta_{j}: j \in [m] \backslash A_{i}\}$ and
$\Delta_{\cap S}^{\bot}=\mbox{\textbf{span}}\{\beta_{j}: j \in [m] \setminus \cap S\}$.
For any $a\in \fqm$, $a$ can be uniquely represented as $a=\sum_{j=1}^{m}a_{j}\beta_{j}$, where $a_{j}\in \fq$.
Thus $a\notin \Delta_{A_{i}}^{\bot}$ if and only if $a_{j}\ne 0$ for some $j\in A_{i}$, and
$a\in \Delta_{\cap S}^{\bot}$ if and only if $a_{j} = 0$ for any $j\in \cap S$.
Suppose that $a$ is an element of $\fqm$ such that $a_{j}\ne 0$ for some $j\in A_{i}\setminus (\cup_{1\leq j \leq h, j\ne i}A_{j})$ as $i$ runs through $[h]$ and $a_{j}=0$ for other $j$'s in $[m]$.
Since $A_{i}\setminus (\cup_{1\leq j\leq h, j\ne i}A_{j})\ne \emptyset$ for any $1 \leq i \leq h$, such $a\in \fqm^*$ must exist. It can be verified that for this given $a$, one has $a\notin \Delta_{A_{i}}^{\bot}$ for any $1\leq i \leq h$ and $a\in \Delta_{\cap S}^{\bot}$ for any $S\subseteq \mathcal{A}$ with $|S|\geq 2$. Thus we have $|\Delta|-\Theta=(q-1)\sum_{1\leq i\leq h} q^{|A_{i}|-1}$ for the supposed $a\in \fqm^*$. This achieves the desired result.

%With above discussion, we conclude that \eqref{eq-thm1-DeltaOmega-3} achieves the equality if and only if
%for any $1\leq i \leq h$ there exists some $j\in A_{i}$ such that $\phi(a,\beta_{j})\ne 0$ and for any $S\subseteq \mathcal{F}$ and $|S|\geq 2$, $\phi(a,\beta_{j}) = 0$ for any $j\in \cap S$.
%Since $\cap S$ is a subset of some $A_{i}$, $1\leq i \leq h$ for any $S\subseteq \mathcal{F}$ and $|S|\geq 2$, it is equivalent to that for any $1\leq i \leq h$, $\phi(a,\beta_{j})\ne 0$ for some $j\in A_{i}\backslash (\cup_{1\leq j \leq h, j\ne i}A_{j})$ and $\phi(a,\beta_{j}) = 0$ for any $j\in A_{i} \cap (\cup_{1\leq j \leq h, j\ne i}A_{j})$. Clealy, there exists some $a\in \fqm^*$ such that the above statement holds if and only if
%$A_{i}\backslash (\cup_{1\leq j \leq h, j\ne i}A_{j}) \ne \emptyset$ for any $1 \leq i \leq h$.

With above detailed computation on $|\Delta|-\Theta$, we conclude that the maximum value of $|\Delta|-\Theta$ as $a$ runs through $\fqm^{*}$ is $(q-1)\sum_{1\leq i\leq h} q^{|A_{i}|-1}$ by \eqref{eq-thm1-DeltaOmega-3}.
Therefore, it follows from \eqref{eq-thm1-wt(a)} that the minimum distance of $\C_{\overline{\Delta}^{c}}$ is $q^{m-1}-\sum\nolimits_{1\leq i\leq h} q^{|A_{i}|-1}$, which is greater than $0$ since $q^{m} > \sum_{1\leq i\leq h} q^{|A_{i}|}$. This shows that the dimension of $\C_{\overline{\Delta}^{c}}$ is $m$. Then the parameters of $\C_{\Delta^{c}}$ are obtained.

2). Denote the minimum distance of $\C_{\overline{\Delta}^{c}}$ by $d$, i.e, $d=q^{m-1}-T$, where $T=\sum_{1\leq i\leq h} q^{|A_{i}|-1}$. Using Lemma \ref{g(m,d)}, we have
\begin{equation} \label{eq-thm1-Griesmer-d}
g(m,d)=\sum_{i=0}^{m-1}\lceil \frac{q^{m-1}-T}{q^i} \rceil = \frac{1}{q-1}(q^m-1-qT+\ell(T))
\end{equation}
and
\begin{equation}\label{eq-thm1-Griesmer-d+1}
g(m,d+1)=\sum_{i=0}^{m-1}\lceil \frac{q^{m-1}-T+1}{q^i} \rceil = \frac{1}{q-1}(q^m-1-qT+\ell(T))+v(T)+1
\end{equation}
where $\ell(T)$ and $v(T)$ are defined as before.

Recall that the length of $\C_{\overline{\Delta}^{c}}$ is $n=(q^m-|\Delta|)/(q-1)$. Now we prove that $n=g(m,d)$ if and only if $|A_{i}\cap A_{j}|=0$ for any $1\leq i<j \leq h$ and at most $q-1$ of $|A_{i}|$'s are the same.
Similar to the computation of \eqref{eq-thm1-DeltaOmega-2}, $|\Delta|$ in \eqref{Delta-card} can be represented as
\begin{equation}\label{eq-thm1-Delta-card-2}
|\Delta|=\sum_{\emptyset \not=S\subseteq \mathcal{A}}(-1)^{|S|-1}q^{|\cap S|}
=\sum_{i=1}^{h}q^{|A_{i}|}-\sum_{i=2}^{h}\sum_{\emptyset \not=S\subseteq \mathcal{A}_{i}}(-1)^{|S|-1}q^{|\cap S|}
=qT-\sum_{i=2}^{h}|\Delta_{\mathcal{A}_{i}}|,
\end{equation}
where $\Delta_{\mathcal{A}_{i}}=\langle\mathcal{A}_{i}\rangle$ and $\mathcal{A}_{i}$ is defined as before. Then $n-g(m,d)=\frac{1}{q-1}(qT+1-|\Delta|-\ell(T))=\frac{1}{q-1}(\sum_{i=2}^{h}|\Delta_{\mathcal{A}_{i}}|-\ell(T)+1)$ by \eqref{eq-thm1-Griesmer-d} and \eqref{eq-thm1-Delta-card-2}. Thus $n=g(m,d)$ if and only if $\sum_{i=2}^{h}|\Delta_{\mathcal{A}_{i}}|=\ell(T)-1$. Note that $|\Delta_{\mathcal{A}_{i}}|\geq 1$ since $0$ is always in
$\Delta_{\mathcal{A}_{i}}$, and $\ell(T)\leq h$ by the definition of $\ell(T)$. Thus $\sum_{i=2}^{h}|\Delta_{\mathcal{A}_{i}}|=\ell(T)-1$ if and only if $|\Delta_{\mathcal{A}_{i}}|=1$ for $2\leq i\leq h$ and $\ell(T)= h$, which are equivalent to $|A_{i}\cap A_{j}|=0$ for any $1\leq i<j \leq h$ and at most $q-1$ of $|A_{i}|$'s are the same. Therefore, $\C_{\overline{\Delta}^{c}}$ is a Griesmer code if and only if $|A_{i}\cap A_{j}|=0$ for any $1\leq i<j \leq h$ and at most $q-1$ of $|A_{i}|$'s are the same.

3). By \eqref{eq-thm1-Griesmer-d+1}, $\C_{\overline{\Delta}^{c}}$ is distance-optimal if $g(m,d+1)>n$, i.e., $|\Delta|-1+(q-1)(v(T)+1)>qT-\ell(T)$.

4). It follows from \eqref{eq-thm1-wt(a)} and \eqref{eq-thm1-DeltaOmega-1} that the Hamming weight of $c_{a}$ in $\C_{\overline{\Delta}^{c}}$ for $a\in \fqm^*$ is
\begin{equation}\label{eq-thm1-wt(a)-2}
wt(c_{a})=q^{m-1}-\frac{1}{q-1}\sum_{\emptyset \not=S\subseteq \mathcal{A}}(-1)^{|S|-1}(q^{|\cap S|}-\Phi(\Delta_{\cap S})).
\end{equation}
By \eqref{eq-thm1-Phi}, it leads to
\[q^{|\cap S|}-\Phi(\Delta_{\cap S})=\left\{\begin{array}{ll}
0,     &   \mbox{ if } a\in \Delta_{\cap S}^{\bot};\\
(q-1)q^{|\cap S|-1},  &   \mbox{ if } a\notin \Delta_{\cap S}^{\bot}.
\end{array} \right.\]

Denote the set of all nonempty subsets of $\mathcal{A}$ by $\Omega$, that is, $\Omega=\{S: S \subseteq \mathcal{A}, S\ne \emptyset \}$.
Let $R$ be a subset of $\Omega$. Then it follows from \eqref{eq-thm1-wt(a)-2} that for $a\in \fqm^*$,
$wt(c_{a})=q^{m-1}-\sum_{S\in R}(-1)^{|S|-1}q^{|\cap S|-1}$ if $a\in \Psi_{R}$, where
\begin{eqnarray} \label{eq-thm1-wt-Psi-R}
\Psi_{R}:=\{a\in \fqm^*: a\in \Delta_{\cap S}^{\bot} \mbox{ for any } S\in \Omega \setminus R  \mbox{ and }  a\notin \Delta_{\cap S}^{\bot} \mbox{ for any } S\in R\}.
\end{eqnarray}
Next we compute the value of $|\Psi_{R}|$.
Note that for $S\in \Omega$, $\Delta_{\cap S}^{\bot}$ is a $|\cap S|$-dimensional $\fq$-subspace of $\fqm$ spanned by $\{\beta_{i}: i \in [m] \backslash \cap S\}$, i.e., $\Delta_{\cap S}^{\bot}=\mbox{\textbf{span}}\{\beta_{i}: i \in [m] \setminus \cap S\}$, and the intersection of some subspaces of $\fqm$ is also a subspace of $\fqm$. When $R= \emptyset$, $wt(c_{a})=q^{m-1}$ if $a\in \Psi_{R}$, and we have
\begin{align} \label{eq-thm1-wt-Fre-1}
|\Psi_{R}|=&|\{a\in \fqm: a\in \Delta_{\cap S}^{\bot} \mbox{ for any } S\in \Omega \}|-1
=|\cap_{S\in \Omega}\Delta_{\cap S}^{\bot}|-1\nonumber\\
=&|\Delta_{\cup_{S\in \Omega}(\cap S)}^{\bot}|-1=q^{m-|\cup_{S\in \Omega}(\cap S)|}-1
=q^{m-|\cup_{i=1}^{h}A_{i}|}-1
\end{align}
since $\cap_{S\in \Omega}\Delta_{\cap S}^{\bot}=\mbox{\textbf{span}}\{\beta_{i}: i \in [m] \setminus \cup_{ S\in \Omega}(\cap S)\}=\Delta_{\cup_{S\in \Omega }(\cap S)}^{\bot}$.
When $R\not = \emptyset$, by \eqref{eq-thm1-wt-Psi-R} and the fact that $\cap_{S\in \Omega \setminus R}\Delta_{\cap S}^{\bot}=\Delta_{\cup_{S\in \Omega \setminus R}(\cap S)}^{\bot}$, it gives
\begin{align}\label{eq-thm1-wt-Fre-2}
\Psi_{R}&=\cap_{S\in \Omega \setminus R}\Delta_{\cap S}^{\bot}\setminus
\cup_{L\in R}((\cap_{S\in \Omega \setminus R}\Delta_{\cap S}^{\bot})\cap\Delta_{\cap L}^{\bot})\nonumber\\
&=\Delta_{\cup_{S\in \Omega \setminus R}(\cap S)}^{\bot}\setminus
\cup_{L\in R}(\Delta_{\cap L}^{\bot}\cap\Delta_{\cup_{S\in \Omega \setminus R}(\cap S)}^{\bot})\nonumber\\
&=\Delta_{\cup_{S\in \Omega \setminus R}(\cap S)}^{\bot}\setminus
\cup_{L\in R}(\Delta_{(\cap L) \cup (\cup_{S\in \Omega \setminus R}(\cap S))}^{\bot}).
\end{align}

Notice that $\cup_{L\in R}(\Delta_{(\cap L) \cup (\cup_{S\in \Omega \setminus R}(\cap S))}^{\bot})$ is in fact the union of some subspaces $\Delta_{(\cap L) \cup (\cup_{S\in \Omega \setminus R}(\cap S))}^{\bot}$ of $\fqm$. Then, similar to the computation of $\Delta$ in \eqref{Delta-card}, it follows from  \eqref{eq-thm1-wt-Fre-2} that
\begin{eqnarray} \label{eq-thm1-Psi-R-value}
|\Psi_{R}|=q^{m-|\cup_{S\in \Omega \setminus R}(\cap S)|}-
\sum_{\emptyset \not=E\subseteq R}(-1)^{|E|-1}q^{m-|(\cup_{L\in E}(\cap L)) \cup (\cup_{S\in \Omega \setminus R}(\cap S))|},
\end{eqnarray}
since $\cap_{L \in E} \Delta_{(\cap L) \cup (\cup_{S\in \Omega \setminus R}(\cap S))}^{\bot}=
\Delta_{ (\cup_{L\in E}(\cap L)) \cup (\cup_{S\in \Omega \setminus R}(\cap S))}^{\bot}$ for any nonempty subset $E$ of $R$.

Then, as $R$ runs through all the subsets of $\Omega$, we get the weight enumerator of
$\C_{\overline{\Delta}^{c}}$ as follows:
\[\sum_{\emptyset \not=R\subseteq \Omega} |\Psi_{R}|
        z^{q^{m-1}-\sum_{S\in R}(-1)^{|S|-1}q^{|\cap S|-1}}+(q^{m-|\cup_{i=1}^{h}A_{i}|}-1)z^{q^{m-1}}+1.\]
This completes the proof.
\end{proof}

\begin{remark} \label{remark-thm1-d-optimal}
Here we show that the given condition $|\Delta|-1+(q-1)(v(T)+1)>qT-\ell(T)$ in 3) of Theorem \ref{optimalcode-anti-sc-pro} for $\C_{\overline{\Delta}^{c}}$ to be distance-optimal can be easily satisfied and consequently many distance-optimal linear codes over $\fq$ can be produced from $\C_{\overline{\Delta}^{c}}$. Note that
$qT-|\Delta|=\sum_{S\subseteq \mathcal{A}, |S|\geq 2}(-1)^{|S|-1}q^{|\cap S|}$ whose value heavily relies on those of
$|A_{i}\cap A_{j}|$ for $1\leq i < j \leq h$. By the definition, $v(T)\geq |A_{1}|$ and $\ell(T)\leq h$. Thus $|\Delta|-1+(q-1)(v(T)+1)>qT-\ell(T)$ can be easily satisfied if $|A_{1}|$ is large enough and $|A_{i}\cap A_{j}|$'s are small enough. This also can be verified by the discussion for the cases $h=1,2,3$ in Section \ref{sec4}.
\end{remark}

\begin{remark} \label{remark-thm1-wt}
The given formula in 4) of Theorem \ref{optimalcode-anti-sc-pro} to compute the weight distribution of $\C_{\overline{\Delta}^{c}}$  is completely computable for a given $\Delta$ with support $\mathcal{A}=\{A_{1},A_{2},\ldots,A_{h}\}$ although the expression seems not so simple. This can be verified in Section \ref{sec4} in which we completely determine the weight distributions of $\C_{\overline{\Delta}^{c}}$ for the three cases of the simplicial complexes with one, two and three maximal elements according to this formula. Thus we say that the weight distribution of $\C_{\overline{\Delta}^{c}}$ is completely determined in Theorem \ref{optimalcode-anti-sc-pro}. Moreover, the weight distribution of $\C_{\overline{\Delta}^{c}}$ can be computed efficiently by the formula given in 4) of Theorem \ref{optimalcode-anti-sc-pro} with the following property: for $S_{1}\in \Omega$ and $S_{2}\in R \subseteq \Omega$, one has $|\Psi_{R}|= 0$ if $S_{1}\notin R$ for some $S_{1}\subseteq S_{2}$, since $a\in \Delta_{\cap S_{1}}^{\bot}$ implies $a\in \Delta_{\cap S_{2}}^{\bot}$ if $S_{1} \subseteq S_{2}$.

\end{remark}

%\begin{example}   \label{example-1}%5-weight
%Let $q=3$, $m=5$, $A_{1}=\{1,2\}$, $A_{2}=\{2,3,4\}$. Magma experiments show that $\C_{\overline{\Delta}^{c}}$ is a $[105,5,69]$ linear code over ${\mathbb F}_{3}$ with weight enumerator $1+48z^{69}+162z^{70}+24z^{72}+6z^{78}+2z^{81}$, which is consistent with our result in Theorem \ref{optimalcode-anti-sc-pro}. This code is optimal due to \cite{GMB}.
%\end{example}

In the following corollary, we take a more in-depth discussion on the case that $|A_{i}\cap A_{j}| =0$ for all $1\leq i< j \leq h$  for the code $\C_{\overline{\Delta}^{c}}$ in Theorem \ref{optimalcode-anti-sc-pro}. As a result, a more simple expression for the weight distribution of $\C_{\overline{\Delta}^{c}}$ is given and a necessary and sufficient condition for $\C_{\overline{\Delta}^{c}}$ to be a near Griesmer code is derived in this case.
\begin{cor} \label{optimalcode-anti-sc-pro-case1}
Let $\Delta$ be a simplicial complex of $\fqm$ with the support $\mathcal{A}=\{A_{1},A_{2},\ldots,A_{h} \}$, where $1 \leq |A_{1}| \leq |A_{2}| \leq \cdots \leq |A_{h}|<m$. Assume that $|A_{i}\cap A_{j}| =0$ for $1\leq i< j \leq h$. Denote $T=\sum_{1\leq i\leq h} q^{|A_{i}|-1}$. Let $\C_{\overline{\Delta}^{c}}$ be defined as in \eqref{CD}. Then $\C_{\overline{\Delta}^{c}}$ is an at most $2^h$-weight $[(q^m-\sum_{i=1}^{h}q^{|A_{i}|}+h-1)/(q-1),m,q^{m-1}-T]$ linear code with weight enumerator
\[\sum_{\emptyset \ne R\subseteq [h]}(q^{m-\sum_{i\in [h]\backslash R}|A_{i}|}-\sum_{\emptyset \not=E\subseteq R}(-1)^{|E|-1}q^{m-\sum_{i\in E}|A_{i}|-\sum_{i\in [h]\backslash R}|A_{i}|})z^{q^{m-1}-\sum_{i\in R}q^{|A_{i}|-1}}+(q^{m-\sum_{i=1}^{h}|A_{i}|}-1)z^{q^{m-1}}+1.\]
Moreover, we have the followings:
\begin{itemize}
  \item [1)]$\C_{\overline{\Delta}^{c}}$ is a Griesmer code if and only if at most $q-1$ of $|A_{i}|$'s are the same;
  \item [2)]$\C_{\overline{\Delta}^{c}}$ is a near Griesmer code if and only if  $\ell(T)=h-(q-1)$; and
  \item [3)]$\C_{\overline{\Delta}^{c}}$ is distance-optimal if $\ell(T)+ (q-1)(v(T)+1)>h$. Specially, when $|A_{i}|=\epsilon$ for $1\leq i \leq h$, where $\epsilon$ is a positive integer, it is distance-optimal if $\ell(h)+(q-1)(v(h)+\epsilon)>h$.
\end{itemize}
\end{cor}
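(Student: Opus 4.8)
The plan is to obtain Corollary~\ref{optimalcode-anti-sc-pro-case1} as a direct specialization of Theorem~\ref{optimalcode-anti-sc-pro}, using repeatedly that $|A_i\cap A_j|=0$ means $A_i\cap A_j=\emptyset$, so that $\cap S=\emptyset$ and $q^{|\cap S|}=1$ for every $S\subseteq\mathcal{A}$ with $|S|\ge2$, while $\Delta_{\cap S}=\{0\}$ and $\Delta_{\cap S}^{\bot}=\fqm$ for such $S$. First I would check the hypotheses of Theorem~\ref{optimalcode-anti-sc-pro}: pairwise disjointness of the $A_i$ together with $|A_i|\ge1$ gives $A_i\setminus(\cup_{j\ne i}A_j)=A_i\ne\emptyset$, and $q^m>\sum_i q^{|A_i|}$ is the implicit nondegeneracy assumption (equivalently $T<q^{m-1}$). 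Then, from part~1) of the theorem, I would split $|\Delta|=\sum_{\emptyset\ne S\subseteq\mathcal{A}}(-1)^{|S|-1}q^{|\cap S|}$ into the singleton contributions $\sum_{i=1}^{h}q^{|A_i|}$ and the higher-order contributions $\sum_{k=2}^{h}(-1)^{k-1}\binom{h}{k}=1-h$ (from $(1-1)^h=0$), obtaining $|\Delta|=\sum_{i=1}^{h}q^{|A_i|}-h+1$ and hence the claimed length; the minimum distance $q^{m-1}-T$ is read off directly.

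For the weight enumerator I would start from part~4) of Theorem~\ref{optimalcode-anti-sc-pro} and first observe, as in Remark~\ref{remark-thm1-wt}, that $|\Psi_R|=0$ whenever $R$ contains some $S_0$ with $|S_0|\ge2$, because the defining condition of $\Psi_R$ would require $a\notin\Delta_{\cap S_0}^{\bot}=\fqm$, which is impossible. Thus only the subsets $R$ of $\{\{A_1\},\dots,\{A_h\}\}$ can contribute, and this already yields the ``at most $2^h$-weight'' assertion since the nonzero weights then number at most $2^h$. For such an $R$, identified with a subset of $[h]$, disjointness collapses $\cup_{S\in\Omega\setminus R}(\cap S)$ to $\cup_{i\in[h]\setminus R}A_i$ of cardinality $\sum_{i\in[h]\setminus R}|A_i|$, and $(\cup_{L\in E}(\cap L))\cup(\cup_{S\in\Omega\setminus R}(\cap S))$ to a disjoint union of cardinality $\sum_{i\in E}|A_i|+\sum_{i\in[h]\setminus R}|A_i|$ for $\emptyset\ne E\subseteq R$; substituting these into the formula for $|\Psi_R|$, and using $|\cup_{i=1}^{h}A_i|=\sum_i|A_i|$ in the all-$q^{m-1}$ term, produces exactly the displayed enumerator.

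Finally I would read off 1), 2), 3). Part~1) is immediate from part~2) of the theorem: Griesmer is equivalent to ``$|A_i\cap A_j|=0$ for all $i<j$ and at most $q-1$ of the $|A_i|$ are equal'', and the first half now holds by hypothesis. For 2), I would recycle the identity $n-g(m,d)=\tfrac{1}{q-1}\bigl(\sum_{i=2}^{h}|\Delta_{\mathcal{A}_i}|-\ell(T)+1\bigr)$ established inside the proof of the theorem; disjointness forces $\mathcal{A}_i=\{\emptyset\}$, hence $|\Delta_{\mathcal{A}_i}|=1$, so $n-g(m,d)=\tfrac{1}{q-1}(h-\ell(T))$, and the code is near Griesmer (i.e.\ $n-1=g(m,d)$) precisely when $\ell(T)=h-(q-1)$. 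For 3), part~3) of the theorem gives distance-optimality under $|\Delta|-1+(q-1)(v(T)+1)>qT-\ell(T)$; since $qT-|\Delta|=qT-\bigl(\sum_i q^{|A_i|}-h+1\bigr)=h-1$, this is exactly $\ell(T)+(q-1)(v(T)+1)>h$. In the uniform case $|A_i|=\epsilon$ for all $i$, I would write $T=h\,q^{\epsilon-1}$ and note that the base-$q$ expansion of $T$ is that of $h$ shifted up by $\epsilon-1$ positions, so $\ell(T)=\ell(h)$ and $v(T)=v(h)+\epsilon-1$, turning the criterion into $\ell(h)+(q-1)(v(h)+\epsilon)>h$.

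The only genuinely delicate step is the weight-enumerator reduction: correctly pinning down which index sets $R$ survive and then simplifying all the cardinalities $|\cup(\cap S)|$ under disjointness. Everything else --- the binomial identity, the computation $qT-|\Delta|=h-1$, and the base-$q$ digit bookkeeping for $v(T)$ and $\ell(T)$ --- is routine once set up.
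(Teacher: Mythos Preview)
Your proposal is correct and follows essentially the same route as the paper's own proof: both specialize Theorem~\ref{optimalcode-anti-sc-pro} using $\cap S=\emptyset$ and $\Delta_{\cap S}^{\bot}=\fqm$ for $|S|\ge2$ to kill all $R$'s containing a non-singleton, then collapse the remaining cardinalities under pairwise disjointness, and handle 1)--3) by the same substitutions $|\Delta|=qT-h+1$ and $\ell(T)=\ell(h)$, $v(T)=v(h)+\epsilon-1$ in the uniform case. One small remark: the paper asserts that the hypothesis $q^m>\sum_i q^{|A_i|}$ is \emph{automatically} satisfied under disjointness, whereas you (more cautiously) treat it as an implicit nondegeneracy assumption; your phrasing is arguably safer, since e.g.\ $q=2$, $m=2$, $A_1=\{1\}$, $A_2=\{2\}$ gives equality (though the resulting ``code'' then has $d=0$, so nothing is lost).
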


\begin{proof}
Clearly, the conditions $A_{i}\backslash (\cup_{1\leq j \leq h, j\ne i}A_{j}) \ne \emptyset$ and
$q^{m} > \sum_{1\leq i\leq h} q^{|A_{i}|}$ in Theorem \ref{optimalcode-anti-sc-pro} are always satisfied under the assumption that   $|A_{i}\cap A_{j}| =0$ for $1\leq i< j \leq h$.
Recall that $|\Delta|=|\cup_{i=1}^{h}\Delta_{A_{i}}|$ where $\Delta_{A_{i}}=\langle \{A_{i}\}\rangle$.
Since $\Delta_{A_{i}} \cap \Delta_{A_{j}} = \{0\}$ due to $|A_{i}\cap A_{j}| =0$, we have $|\Delta|=\sum_{i=1}^{h}q^{|A_{i}|}-h+1$.
Then the parameters of $\C_{\overline{\Delta}^{c}}$ follow directly from 1) of Theorem \ref{optimalcode-anti-sc-pro}.
Moreover, by 2) and 3) of Theorem \ref{optimalcode-anti-sc-pro}, $\C_{\overline{\Delta}^{c}}$ is a Griesmer code if and only if at most $q-1$ of $|A_{i}|$'s are the same and it is distance-optimal if $\ell(T)+ (q-1)(v(T)+1)>h$. By Lemma \ref{g(m,d)}, one can check that $n-g(m,d)=1$ if and only if $\ell(T)=h-(q-1)$, where $n=(q^m-\sum_{i=1}^{h}q^{|A_{i}|}+h-1)/(q-1)$ and $d=q^{m-1}-T$.
Specially, when $|A_{i}|=\epsilon$ for $1\leq i \leq h$, which implies $T=hq^{\epsilon-1}$, it can be verified that $\ell(T)=\ell(h)$ and $v(T)=v(h)+\epsilon-1$. Thus, when $|A_{i}|=\epsilon$ for $1\leq i \leq h$, $\C_{\overline{\Delta}^{c}}$ is distance-optimal if $\ell(h)+ (q-1)(v(h)+\epsilon)>h$. This proves 1), 2) and 3). Next we compute the weight enumerator of $\C_{\overline{\Delta}^{c}}$ according to the proof of 4) of Theorem \ref{optimalcode-anti-sc-pro}.

Recall from the proof of 4) of Theorem \ref{optimalcode-anti-sc-pro} that for $a\in \fqm^*$,
$wt(c_{a})=q^{m-1}-\sum_{S\in R}(-1)^{|S|-1}q^{|\cap S|-1}$ if $a\in \Psi_{R}$,
where $\Omega=\{S: S \subseteq \mathcal{A}, S\ne \emptyset \}$, $R$ is a subset of $\Omega$ and $\Psi_{R}$ is defined as in \eqref{eq-thm1-wt-Psi-R}. Notice that for $S\in \Omega$ with $|S|\geq 2$, one has $\Delta_{\cap S}^{\bot}=\fqm$ since $\cap S=\emptyset$ and $\Delta_{\cap S}=\{0\}$. If there is some $S$ with $|S|\geq 2$ in $R$, then $|\Psi_{R}|=0$.
Thus we can rewrite simply that for $a\in \fqm^*$,
$wt(c_{a})=q^{m-1}-\sum_{i\in R}q^{|A_{i}|-1}$ if $a\in \Psi_{R}$, where $R$ is a subset of $[h]$, $\Psi_{R}=\{a\in \fqm^*: a\in \Delta_{A_{i}}^{\bot} \mbox{ for any } i\in [h] \setminus R  \mbox{ and }  a\notin \Delta_{A_{i}}^{\bot} \mbox{ for any } i\in R\}$ and $\Delta_{A_{i}}=\langle \{A_{i}\}\rangle$. Specially, when $R=\emptyset$, similar to the computation in the proof of 4) of Theorem \ref{optimalcode-anti-sc-pro}, we also have $wt(c_{a})=q^{m-1}$ if $a\in \Psi_{R}$ and $|\Psi_{R}|=q^{m-|\cup_{i=1}^{h}A_{i}|}-1$. When $R\ne \emptyset$, similar to the computation of $|\Psi_{R}|$ in the proof of 4) of Theorem \ref{optimalcode-anti-sc-pro}, we have
\[|\Psi_{R}|=q^{m-\sum_{i\in [h] \setminus R}|A_{i}|}-\sum_{\emptyset \not=E\subseteq R}(-1)^{|E|-1}q^{m-\sum_{i\in E}|A_{i}|-\sum_{i\in [h] \setminus R}|A_{i}|}.\]
Therefore, as $R$ runs through all subsets of $[h]$, the weight enumerator of $\C_{\overline{\Delta}^{c}}$ can be expressed as
\[\sum_{\emptyset \ne R \subseteq [h]}|\Psi_{R}|z^{q^{m-1}-\sum_{i\in R}q^{|A_{i}|-1}}+(q^{m-\sum_{i=1}^{h}|A_{i}|}-1)z^{q^{m-1}}+1.\]
Due to the $2^h$ choices of the subset $R$ of $[h]$, $\C_{\overline{\Delta}^{c}}$ is at most $2^h$-weight.
This completes the proof.
\end{proof}

\begin{remark} \label{remark-SScodes}
The Griesmer codes in Corollary \ref{optimalcode-anti-sc-pro-case1} (or Theorem \ref{optimalcode-anti-sc-pro}) are indeed the Solomon-Stiffler codes, since $\overline{\Delta}^{c}$ can be viewed as the complement of the union of $h$ disjoint projective subspaces of dimensions $|A_{i}|-1$ of $\PG(m-1, q)$ when $|A_{i}\cap A_{j}|=0$ for any $1\leq i<j \leq h$. Definitely, for the other cases (not the Griesmer codes), our codes $\C_{\overline{\Delta}^{c}}$ in Corollary \ref{optimalcode-anti-sc-pro-case1} and Theorem \ref{optimalcode-anti-sc-pro} are different from the Solomon-Stiffler codes.
%It's worth mentioning that there is a hidden restriction for the Griesmer codes in Corollary \ref{optimalcode-anti-sc-pro-case1} due to the definition of simplicial complexes, that is, $\sum_{i=1}^{h}|A_{i}|\leq m$, which isn't necessary for the Solomon-Stiffler codes. Thus our Griesmer codes in Corollary \ref{optimalcode-anti-sc-pro-case1} cannot cover the Solomon-Stiffler codes. Therefore, we say that the weight distributions of the Solomon-Stiffler codes are partially determined in this paper.
\end{remark}

\begin{remark}
Notice that the necessary and sufficient condition $\ell(T)=h-(q-1)$ for $\C_{\overline{\Delta}^{c}}$ to be a near Griesmer code in Corollary \ref{optimalcode-anti-sc-pro-case1} can be easily satisfied by selecting proper $A_{i}$'s.
Moreover, the sufficient condition $\ell(T)+(q-1)(v(T)+1)>h$ for $\C_{\overline{\Delta}^{c}}$ to be distance-optimal can be easily satisfied if $|A_{1}|$ is large enough since $1\leq \ell(T)\leq h$ and $v(T)\geq |A_{1}|$, and consequently many distance-optimal linear codes can be produced in Corollary \ref{optimalcode-anti-sc-pro-case1} besides (near) Griesmer codes.
\end{remark}

%\begin{example}   \label{example-1}%5-weight
%Let $q=5$, $m=5$, $A_{1}=\{1\}$, $A_{2}=\{2\}$, $A_{3}=\{3\}$, $A_{4}=\{4\}$, $A_{5}=\{5\}$. Magma experiments show that $\C_{\overline{\Delta}^{c}}$ is a $[776,5,620]$ linear code over ${\mathbb F}_{5}$ with weight enumerator $1+1024z^{620}+1280z^{621}+640z^{622}+160z^{623}+20z^{624}$, which is consistent with our result in Corollary \ref{optimalcode-anti-sc-pro-case1}. By the Griesmer bound, this is a near Griesmer code and it is distance-optimal.
%\end{example}

Here we will investigate the parameters and weight distribution of the projective linear code $\C_{\overline{\Delta}^*}$ defined by \eqref{CD}, where $\Delta^*=\Delta\setminus\{0\}$ and $\overline{\Delta}^*$ is defined by $\Delta^{*}=\fq^{*}\overline{\Delta}^{*}=\{yz: y \in \fq^{*} \mbox{ and } z \in \overline{\Delta}^{*}\}$ in which $z_{i}/z_{j} \notin \fq^*$ for distinct $z_{i}$ and $z_{j}$ in $\overline{\Delta}^{*}$.
%But we will pay less attention to the code $\C_{\overline{\Delta}^*}$ in this paper since it seems hard to produce new optimal linear codes from $\C_{\overline{\Delta}^*}$ (see some details in Remark \ref{remark-prop-1}).

\begin{prop} \label{prop-1}
Let $\Delta$ be a simplicial complex of $\fqm$ with the support $\mathcal{A}=\{A_{1},A_{2},\ldots,A_{h}\}$, where $1 \leq |A_{1}| \leq |A_{2}| \leq \cdots \leq |A_{h}|<m$. Assume that $A_{i}\backslash (\cup_{1\leq j \leq h, j\ne i}A_{j}) \ne \emptyset$ for any $1 \leq i \leq h$ and $q^{m} > \sum_{1\leq i\leq h} q^{|A_{i}|}$. Let the multiset $\{*\,\,wt(c_{a}): a\in \fqm^* \,\,*\} \cup \{*\,\, 0 \,\,*\}$ be the weight distribution of $\C_{\overline{\Delta}^{c}}$ given as in Theorem \ref{optimalcode-anti-sc-pro} and define the multiset
\begin{equation} \label{eq-remark-1}
\{*\,\,q^{m-1}-wt(c_{a}): a\in \fqm^* \,\,*\} \cup \{*\,\, 0 \,\,*\}.
\end{equation}
Let $\C_{\overline{\Delta}^*}$ be defined as in \eqref{CD}. Then $\C_{\overline{\Delta}^*}$ has parameters $[(|\Delta|-1)/(q-1),|\cup_{i=1}^{h}A_{i}|,q^{|A_{1}|-1}]$, where $|\Delta|=\sum_{\emptyset \not=S\subseteq \mathcal{A}}(-1)^{|S|-1}q^{|\cap S|}$, and its weight distribution is given by
\[\mbox{$q^{m-1}-wt(c_{a})$ with multiplicity $e_{a}/q^{m-|\cup_{i=1}^{h}A_{i}|}$}\]
for all $q^{m-1}-wt(c_{a})$ in \eqref{eq-remark-1}, where $e_{a}$ is the multiplicity of $q^{m-1}-wt(c_{a})$ in the multiset of \eqref{eq-remark-1}.
\end{prop}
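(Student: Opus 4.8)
The plan is to obtain everything about $\C_{\overline{\Delta}^{*}}$ from $\C_{\overline{\Delta}^{c}}$ by passing through the Simplex code. The starting observation is the disjoint decomposition $\fqm^{*}=\fq^{m}\setminus\{0\}=\Delta^{c}\sqcup\Delta^{*}$; since each of $\Delta^{c}$ and $\Delta^{*}$ is a union of punctured lines through the origin, this projectivizes to $\overline{\fqm^{*}}=\overline{\Delta}^{c}\sqcup\overline{\Delta}^{*}$, i.e.\ $\overline{\Delta}^{c}$ and $\overline{\Delta}^{*}$ together partition the point set of $\PG(m-1,q)$. Because $\C_{\overline{\fqm^{*}}}$ is the Simplex code, all of whose nonzero codewords have weight $q^{m-1}$, we get for every $a\in\fqm^{*}$ that $wt_{\overline{\Delta}^{c}}(c_{a})+wt_{\overline{\Delta}^{*}}(c_{a})=q^{m-1}$; hence the weight of $c_{a}$ in $\C_{\overline{\Delta}^{*}}$ equals $q^{m-1}-wt(c_{a})$, where $wt(c_{a})$ denotes its weight in $\C_{\overline{\Delta}^{c}}$, and $wt_{\overline{\Delta}^{*}}(c_{0})=0$. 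The length is immediate: $|\overline{\Delta}^{*}|=|\Delta^{*}|/(q-1)=(|\Delta|-1)/(q-1)$ with $|\Delta|$ as in Theorem~\ref{optimalcode-anti-sc-pro}.

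Next I would compute the dimension via the kernel of the $\fq$-linear surjection $a\mapsto c_{a}$ from $\fqm$ onto $\C_{\overline{\Delta}^{*}}$. This kernel is $\{a\in\fqm:\Tr_{q}^{q^{m}}(ax)=0\text{ for all }x\in\overline{\Delta}^{*}\}$, and since $\overline{\Delta}^{*}$, $\Delta^{*}$ and $\Delta$ all span the same $\fq$-subspace of $\fqm$ --- namely $\mbox{\textbf{span}}\{\alpha_{j}:j\in\cup_{i=1}^{h}A_{i}\}$, because $\Delta_{A_{i}}=\mbox{\textbf{span}}\{\alpha_{j}:j\in A_{i}\}\subseteq\Delta$ while every element of $\Delta$ has support contained in some $A_{i}$ --- this kernel equals $\mbox{\textbf{span}}\{\beta_{j}:j\in[m]\setminus\cup_{i=1}^{h}A_{i}\}$, of dimension $m-|\cup_{i=1}^{h}A_{i}|$. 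Therefore $\dim\C_{\overline{\Delta}^{*}}=|\cup_{i=1}^{h}A_{i}|$, and moreover each codeword of $\C_{\overline{\Delta}^{*}}$ is the image of exactly $q^{m-|\cup_{i=1}^{h}A_{i}|}$ values of $a$, these forming a coset of the kernel.

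For the minimum distance I would argue directly from $wt_{\overline{\Delta}^{*}}(c_{a})=\frac{1}{q-1}|\{x\in\Delta:\Tr_{q}^{q^{m}}(ax)\ne0\}|$. If $c_{a}\ne0$ then $a$ lies outside the kernel, so $a\notin\Delta_{A_{i}}^{\bot}$ for some $i$; viewing $\Delta_{A_{i}}$ as a $|A_{i}|$-dimensional $\fq$-subspace, Lemma~\ref{lem-expsum-H} gives $|\{x\in\Delta_{A_{i}}:\Tr_{q}^{q^{m}}(ax)\ne0\}|=(q-1)q^{|A_{i}|-1}\ge(q-1)q^{|A_{1}|-1}$, and this set is contained in $\{x\in\Delta:\Tr_{q}^{q^{m}}(ax)\ne0\}$; hence $wt_{\overline{\Delta}^{*}}(c_{a})\ge q^{|A_{1}|-1}$. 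Equality is attained by choosing $k\in A_{1}\setminus(\cup_{j\ne1}A_{j})$ --- nonempty by hypothesis --- and setting $a=\beta_{k}$: then $a\in\Delta_{A_{i}}^{\bot}$ for all $i\ge2$ but $a\notin\Delta_{A_{1}}^{\bot}$, so $\{x\in\Delta:\Tr_{q}^{q^{m}}(ax)\ne0\}=\{x\in\Delta_{A_{1}}:\Tr_{q}^{q^{m}}(ax)\ne0\}$ has exactly $(q-1)q^{|A_{1}|-1}$ elements. This yields $d=q^{|A_{1}|-1}$.

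Finally, I would read off the weight distribution by bookkeeping. Since all fibers of $a\mapsto c_{a}$ have the same size $q^{m-|\cup_{i=1}^{h}A_{i}|}$ and $wt_{\overline{\Delta}^{*}}(c_{a})$ depends only on $c_{a}$, the number of codewords of $\C_{\overline{\Delta}^{*}}$ of a given weight $v$ equals $\frac{1}{q^{m-|\cup_{i=1}^{h}A_{i}|}}$ times $|\{a\in\fqm:wt_{\overline{\Delta}^{*}}(c_{a})=v\}|$. By the first paragraph the multiset $\{*\,\,wt_{\overline{\Delta}^{*}}(c_{a}):a\in\fqm\,\,*\}$ is precisely the multiset \eqref{eq-remark-1} (the entry $0$ arising from $a=0$ together with the $a\ne0$ for which $wt(c_{a})=q^{m-1}$, which by Theorem~\ref{optimalcode-anti-sc-pro} are exactly the nonzero kernel elements), so this count is $e_{a}/q^{m-|\cup_{i=1}^{h}A_{i}|}$, as claimed; in particular the weight $0$ gets multiplicity $1$, consistent with the computed dimension. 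The only mild subtlety anywhere in the argument is that, unlike $\C_{\overline{\Delta}^{c}}$ (which has dimension $m$), the code $\C_{\overline{\Delta}^{*}}$ generally has dimension strictly less than $m$, so the trace map is not injective and the common fiber size $q^{m-|\cup_{i=1}^{h}A_{i}|}$ must be carried through every count; once the Simplex-code identity of the first paragraph is established, the rest is routine.
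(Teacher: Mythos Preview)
Your proof is correct, and the overall architecture matches the paper's: both establish the weight-complementarity $wt_{\overline{\Delta}^{*}}(c_{a})+wt_{\overline{\Delta}^{c}}(c_{a})=q^{m-1}$ for $a\ne0$, then determine the kernel of $a\mapsto c_{a}$ to get the dimension and the factor $q^{m-|\cup_{i}A_{i}|}$ in the multiplicities, and finally pin down the minimum distance by a lower bound plus an explicit witness.

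The genuine difference is in how the dimension and minimum-distance steps are carried out. The paper routes both through the machinery of Theorem~\ref{optimalcode-anti-sc-pro}: the kernel size is obtained as $|\Psi_{\emptyset}|+1$ using \eqref{eq-thm1-wt-Fre-1}, and the lower bound $wt_{\overline{\Delta}^{*}}(c_{a})\ge q^{|A_{1}|-1}$ is proved by rewriting $|\Delta|-\Theta$ via the decomposition \eqref{eq-thm1-DeltaOmega-2} and interpreting each term $f_{a}(\{A_{i}\})-\sum_{\emptyset\ne S\subseteq\mathcal{A}_{i}}(-1)^{|S|-1}f_{a}(S)$ as a Hamming weight in an auxiliary code $\C_{\Delta_{A_{i}}\setminus\Delta_{\mathcal{A}_{i}}}$. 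Your route is more elementary and self-contained: you compute the kernel directly as the dual subspace $\mbox{\textbf{span}}\{\beta_{j}:j\notin\cup_{i}A_{i}\}$, and for the lower bound you simply use the containment $\Delta_{A_{i}}\subseteq\Delta$ together with the fact that a nonzero linear functional on a $|A_{i}|$-dimensional subspace has $(q-1)q^{|A_{i}|-1}$ nonzeros. Your approach avoids the inclusion--exclusion bookkeeping entirely; the paper's approach, on the other hand, stays within the framework already set up for Theorem~\ref{optimalcode-anti-sc-pro} and so reuses existing computations rather than introducing a new (if short) argument.
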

\begin{proof}
It's obvious that $\C_{\overline{\Delta}^*}$ has length $n'=(|\Delta|-1)/(q-1)$ where $|\Delta|$ is given as in \eqref{Delta-card}. Next we study the Hamming weight of the codewords $c'_{a}$ in $\C_{\overline{\Delta}^*}$ according to the proof of Theorem \ref{optimalcode-anti-sc-pro} and the relation between $\C_{\overline{\Delta}^*}$ and $\C_{\overline{\Delta}^{c}}$.
It can be readily verified that  $wt(c'_{a})+wt(c_{a})=q^{m-1}$ for $a\in \fqm^*$, where $wt(c'_{a})$ (resp. $wt(c_{a})$) denotes the Hamming weight of the codeword $c'_{a}$ (resp. $c_{a}$) in $\C_{\overline{\Delta}^*}$ (resp. $\C_{\overline{\Delta}^{c}}$). Clearly, the weight distribution of $\C_{\overline{\Delta}^{c}}$ is given by the multiset $\{*\,\,wt(c_{a}): a\in \fqm^* \,\,*\} \cup \{*\,\, 0 \,\,*\}$ which can be obtained by 4) of Theorem \ref{optimalcode-anti-sc-pro} and then the multiset $\{*\,\,wt(c'_{a}): a\in \fqm^* \,\,*\} \cup \{*\,\, 0 \,\,*\}$ on the code $\C_{\overline{\Delta}^*}$ is given by \eqref{eq-remark-1}.

Now we need to compute the multiplicity of $0$ in \eqref{eq-remark-1} and the minimum nonzero value in the set \eqref{eq-remark-1} to determine the dimension and minimum distance of $\C_{\overline{\Delta}^*}$.

Due to \eqref{eq-thm1-wt(a)} in the proof of 1) of Theorem \ref{optimalcode-anti-sc-pro}, one has $wt(c'_{a})=(|\Delta|-\Theta)/(q-1)$, where $|\Delta|-\Theta$ is defined as in the proof of Theorem \ref{optimalcode-anti-sc-pro}. Moreover, according to the proof of 4) of Theorem \ref{optimalcode-anti-sc-pro}, we also have
$wt(c'_{a})=\sum_{\emptyset \not=S\in R}(-1)^{|S|-1}q^{|\cap S|-1}$ if $a\in \Psi_{R}$, where $R$ is a subset of $\Omega$, $\Omega=\{S: S \subseteq \mathcal{A}, S\ne \emptyset \}$ and $\Psi_{R}$ is defined as in \eqref{eq-thm1-wt-Psi-R}. Observe that $wt(c'_{a})=0$ if $a\in \Psi_{R}$ with $R= \emptyset$. Note that for $S_{1},\,\,S_{2}\in \Omega$, one can verify that $a\in \Delta_{\cap S_{1}}^{\bot}$ implies $a\in \Delta_{\cap S_{2}}^{\bot}$ if $S_{1} \subseteq S_{2}$. Thus, when $R \ne \emptyset$, we conclude that there must exist some $A_{i}$ for $1\leq i \leq h$ such that $\{A_{i}\} \in R$ if $|\Psi_{R}|\ne 0$. Since it's meaningless for $|\Psi_{R}|=0$, we always assume that $|\Psi_{R}|\ne 0$ for the subsequent proof. We first consider the case that $a\in \Psi_{R}$ and $\{A_{1}\} \in R$, which implies that $a\notin \Delta_{A_{1}}^{\bot}$ where $\Delta_{A_{1}}=\langle \{A_{1}\}\rangle$. It follows from  \eqref{eq-thm1-DeltaOmega-2} that
\begin{equation} \label{eq-prop1-2}
wt(c'_{a})=(|\Delta|-\Theta)/(q-1)
=\frac{1}{q-1}(f_{a}(\{A_{1}\})+\sum_{i=2}^{h}(f_{a}(\{A_{i}\})-\sum_{\emptyset \not=S\subseteq \mathcal{A}_{i}}(-1)^{|S|-1}f_{a}(S))).
\end{equation}
Here $f_{a}(\{A_{1}\})=(q-1)q^{|A_{1}|-1}$ due to $a\notin \Delta_{A_{1}}^{\bot}$.
Moreover, similar to the computation of \eqref{eq-thm1-wt(a)} and \eqref{eq-thm1-DeltaOmega-1}, one can check that $f_{a}(\{A_{i}\})-\sum_{\emptyset \not=S\subseteq \mathcal{A}_{i}}(-1)^{|S|-1}f_{a}(S)$ is the Hamming weight of the codeword $\widehat{c}_{a}$ in the code $\C_{\Delta_{A_{i}}\setminus \Delta_{\mathcal{A}_{i}}}$ defined by $\C_{\Delta_{A_{i}}\setminus \Delta_{\mathcal{A}_{i}}}=\{\widehat{c}_{a}=(\Tr_{q}^{q^{m}}(ax))_{x\in \Delta_{A_{i}}\setminus \Delta_{\mathcal{A}_{i}}}: a\in \fqm\}$. This implies that $f_{a}(\{A_{i}\})-\sum_{\emptyset \not=S\subseteq \mathcal{A}_{i}}(-1)^{|S|-1}f_{a}(S) \geq 0$ for $2 \leq i \leq h$. Thus we have $wt(c'_{a})\geq q^{|A_{1}|-1}$ if $a\in \Psi_{R}$ and $\{A_{1}\} \in R$. Similarly, we can prove that $wt(c'_{a})\geq q^{|A_{i}|-1}$ if $a\in \Psi_{R}$ and $\{A_{i}\} \in R$ for $2\leq i \leq h$ since the condition $1 \leq |A_{1}| \leq |A_{2}| \leq \cdots \leq |A_{h}|<m$ has no effect on the above computation. Then one has $wt(c'_{a})\geq q^{|A_{1}|-1}>0$ for any $a\in \Psi_{R}$ with $R\ne \emptyset$ and $|\Psi_{R}|\ne 0$ due to $1 \leq |A_{1}| \leq |A_{2}| \leq \cdots \leq |A_{h}|<m$.
This accordingly proves that $wt(c'_{a})=0$ if and only if $a\in \Psi_{R}$, $R= \emptyset$ or $a=0$, which implies the multiplicity of $0$ in the multiset of \eqref{eq-remark-1} is $q^{m-|\cup_{i=1}^{h}A_{i}|}$ due to \eqref{eq-thm1-wt-Fre-1} and consequently the dimension of $\C_{\overline{\Delta}^*}$ is $|\cup_{i=1}^{h}A_{i}|$. Then the weight distribution of $\C_{\overline{\Delta}^*}$ follows from \eqref{eq-remark-1}.
In addition, it can be readily verified that $wt(c'_{a})= q^{|A_{1}|-1}$ if $a\in \Psi_{R}$ and $R=\{\{A_{1}\}\}$, and  in this case $|\Psi_{R}|=q^{m-|\cup_{i=2}^{h}A_{i}|}-q^{m-|\cup_{i=1}^{h}A_{i}|}>0$ due to \eqref{eq-thm1-Psi-R-value}. This indicates that the minimum distance of $\C_{\overline{\Delta}^*}$ is $q^{|A_{1}|-1}$.
 This completes the proof.
\end{proof}

\begin{remark} \label{remark-prop-1}
%It's hard to produce optimal linear codes in Proposition \ref{prop-1} for general $h$ and $A_{i}$'s, which is the reason why we pay less attention to the code $\C_{\overline{\Delta}^*}$.
$\C_{\overline{\Delta}^*}$ in Proposition \ref{prop-1} is a trivial code with parameters $[h,h,1]$ (the full vector space $\fq^h$) if $|A_{h}|=1$ and it is the well-known simplex code with parameters $[(q^{|A_{1}|}-1)/(q-1), |A_{1}|, q^{|A_{1}|-1}]$ if $h=1$. With some detailed computation, when $|A_{h}|>1$ and $h>1$, one can verify that $\C_{\overline{\Delta}^*}$ cannot be a Griesmer code, i.e., $n\ne g(k,d)$, and $n<g(k,d+1)$ only if $(q, h)=(2,2)$ and $|A_{1}|=|A_{2}|=2$, which implies that
$\C_{\overline{\Delta}^*}$ is distance-optimal if $(q, h)=(2,2)$ and $|A_{1}|=|A_{2}|=2$.
\end{remark}

%\begin{example}   %5-weight
%Let $q=3$, $m=5$, $A_{1}=\{1,2\}$, $A_{2}=\{2,3,4\}$. Magma experiments show that $\C_{\overline{\Delta}^*}$ is a $[16,4,3]$ linear code over ${\mathbb F}_{3}$ with weight enumerator $1+2z^{3}+8z^{9}+54z^{11}+16z^{12}$, which is consistent with our result in Proposition \ref{prop-1}.
%By the weight enumerator of $\C_{\overline{\Delta}^{c}}$ (see in the subsequent Example \ref{exam-5}), the weight enumerator of $\C_{\overline{\Delta}^*}$ follows from Proposition \ref{prop-1}.
%Note that the best known linear code over ${\mathbb F}_{3}$ of length $16$ and dimension $4$ has minimum weight $9$ due to \cite{GMB}.
%\end{example}

\section{Explicit constructions of optimal linear codes for $h=1,2,3$}  \label{sec4}
In this section, we take a more in-depth discussion on the cases $h=1,2,3$ of Theorem \ref{optimalcode-anti-sc-pro} in order to show that many infinite families of optimal linear codes can be obtained from Theorem \ref{optimalcode-anti-sc-pro}, including Griesmer codes, near Griesmer codes and distance-optimal codes.
Moreover, for these three cases, we will present the parameters and weight distributions of $\C_{\overline{\Delta}^{c}}$ more explicitly and characterize the optimality of $\C_{\overline{\Delta}^{c}}$ in detail.
Notice that the cases $h=1,2,3$ correspond exactly to the cases of simplicial complexes $\Delta$ with one, two and three maximal elements, respectively.

\begin{thm} \label{optimalcode-anti-sc-h=1}
Let $\Delta$ be a simplicial complex of $\fqm$ with exactly one maximal element and its support is $\{A\}$ with $A\subseteq [m]$ and $1 \leq |A|<m$.
Then $\C_{\overline{\Delta}^{c}}$ defined by \eqref{CD} is a $2$-weight $[(q^m-q^{|A|})/(q-1),m,q^{m-1}-q^{|A|-1}]$ linear code with weight distribution
\begin{center}
\newcommand{\tabincell}[2]{\begin{tabular}{@{}#1@{}}#2\end{tabular}}
\begin{tabular}{|l|l|}
\hline Weight $w$ & Multiplicity $A_{w}$\\ \hline\hline
                                           $0$ & $1$ \\ \hline
$q^{m-1}$               & $q^{m-|A|}-1$ \\ \hline
$q^{m-1}-q^{|A|-1}$ & $q^{m}-q^{m-|A|}$ \\ \hline
\end{tabular}
\end{center}
and it is a Griesmer code.
\end{thm}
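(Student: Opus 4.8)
The plan is to obtain this theorem as the $h=1$ instance of Theorem \ref{optimalcode-anti-sc-pro} (equivalently, of Corollary \ref{optimalcode-anti-sc-pro-case1} with $h=1$): with a single maximal element, essentially every hypothesis and every formula there collapses. First I would record the specializations. When $\mathcal{A}=\{A\}$, the hypothesis $A_{i}\setminus(\cup_{j\ne i}A_{j})\ne\emptyset$ is vacuous, the hypothesis $q^{m}>\sum_{i}q^{|A_{i}|}$ becomes $q^{m}>q^{|A|}$ and holds because $|A|<m$, and $|\Delta|=|\langle\{A\}\rangle|=q^{|A|}$ since $\Delta$ is a $|A|$-dimensional $\fq$-subspace of $\fqm$; moreover $T=\sum_{i}q^{|A_{i}|-1}=q^{|A|-1}$. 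Feeding these into part 1) of Theorem \ref{optimalcode-anti-sc-pro} immediately yields the length $(q^{m}-q^{|A|})/(q-1)$, the dimension $m$, and the minimum distance $q^{m-1}-q^{|A|-1}$.

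For the Griesmer property I would invoke part 2) of Theorem \ref{optimalcode-anti-sc-pro}: its two requirements — that $|A_{i}\cap A_{j}|=0$ for all $i<j$, and that at most $q-1$ of the cardinalities $|A_{i}|$ coincide — hold trivially when $h=1$ (there is no pair $i<j$, and there is only one cardinality, while $q-1\ge 1$). Hence $\C_{\overline{\Delta}^{c}}$ is a Griesmer code. As a cross-check, Lemma \ref{g(m,d)} with $T=q^{|A|-1}$ gives $\ell(T)=1$, so $g(m,d)=\frac{1}{q-1}(q^{m}-1-q\cdot q^{|A|-1}+1)=(q^{m}-q^{|A|})/(q-1)=n$.

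For the weight distribution I would specialize the enumerator of part 4) of Theorem \ref{optimalcode-anti-sc-pro}. Here $\Omega=\{\{A\}\}$, so the running index $R$ is either $\emptyset$ or $\{\{A\}\}$. The term coming from $R=\emptyset$ is $(q^{m-|\cup_{i}A_{i}|}-1)z^{q^{m-1}}=(q^{m-|A|}-1)z^{q^{m-1}}$. For $R=\{\{A\}\}$ the exponent is $q^{m-1}-q^{|A|-1}$, and since $\Omega\setminus R=\emptyset$ the union $\cup_{S\in\Omega\setminus R}(\cap S)$ is empty (cardinality $0$), whence $|\Psi_{R}|=q^{m}-q^{m-|A|}$, the only nonempty $E\subseteq R$ being $E=R$ itself. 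Adding the $+1$ for the zero codeword reproduces the table verbatim; the code is genuinely $2$-weight because the two nonzero weights $q^{m-1}$ and $q^{m-1}-q^{|A|-1}$ are distinct and both their multiplicities $q^{m-|A|}-1$ and $q^{m}-q^{m-|A|}$ are strictly positive under $1\le|A|<m$.

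I do not expect a real obstacle here: the argument is a bookkeeping specialization, and the only point needing care is evaluating the part-4) formula correctly when $\Omega$ is a singleton, where an empty union of index sets must be read as having cardinality $0$ (so $q^{m-0}=q^{m}$) and an empty family of subspaces intersects to the whole of $\fqm$. As a final sanity check I would verify that the multiplicities sum to $q^{m}$, i.e.\ $1+(q^{m-|A|}-1)+(q^{m}-q^{m-|A|})=q^{m}$. If a self-contained proof is preferred over the specialization, the same numbers drop out in one line: for $a\in\fqm^{*}$ the map $x\mapsto\Tr_{q}^{q^{m}}(ax)$ restricted to the $|A|$-dimensional space $\Delta$ is $\fq$-linear, so its kernel has $q^{|A|}$ elements when $a\in\Delta^{\bot}$ and $q^{|A|-1}$ otherwise, and $wt(c_{a})=\frac{1}{q-1}(q^{m}-q^{|A|})-\frac{1}{q-1}(q^{m-1}-|\ker|)$ then gives the two weights with multiplicities $|\Delta^{\bot}|-1=q^{m-|A|}-1$ and $q^{m}-q^{m-|A|}$; since every nonzero $a$ yields a nonzero codeword, the dimension is forced to be $m$.
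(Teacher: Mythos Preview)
Your proposal is correct and takes essentially the same approach as the paper: the paper's proof simply invokes Corollary \ref{optimalcode-anti-sc-pro-case1} (the $|A_i\cap A_j|=0$ specialization of Theorem \ref{optimalcode-anti-sc-pro}) with $h=1$ to read off the parameters, the Griesmer property, and the weight enumerator in one line. Your version is more explicit (you specialize Theorem \ref{optimalcode-anti-sc-pro} directly and add the cross-checks), but the substance is identical.
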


\begin{proof}
By Corollary \ref{optimalcode-anti-sc-pro-case1}, the parameters of $\C_{\overline{\Delta}^{c}}$ follow directly and it is a Griesmer code. Moreover, its weight enumerator is given by
$1+(q^{m-|A|}-1)z^{q^{m-1}}+(q^{m}-q^{m-|A|})z^{q^{m-1}-q^{|A|-1}}$ due to Corollary \ref{optimalcode-anti-sc-pro-case1}.
This completes the proof.
\end{proof}

\begin{example}
Let $q=3$, $m=5$, $A=\{1,2\}$. Magma experiments show that $\C_{\overline{\Delta}^{c}}$ is a $[117,5,78]$ linear code over ${\mathbb F}_{3}$ with weight enumerator $1+216z^{78}+26z^{81}$, which is consistent with our result in Theorem \ref{optimalcode-anti-sc-h=1}. This code is a Griesmer code by the Griesmer bound and is optimal due to \cite{GMB}.
\end{example}

\begin{thm} \label{optimalcode-anti-sc-h=2}
Let $\Delta$ be a simplicial complex of $\fqm$ with the support $\mathcal{A}=\{A_{1},A_{2}\}$, where $1 \leq |A_{1}| \leq |A_{2}|<m$.
Assume that $q^{m} > q^{|A_{1}|}+q^{|A_{2}|}$. Let $T=q^{|A_{1}|-1}+q^{|A_{2}|-1}$. Then $\C_{\overline{\Delta}^{c}}$ defined by \eqref{CD} is an at most $5$-weight $[(q^m-q^{|A_{1}|}-q^{|A_{2}|}+q^{|A_{1}\cap A_{2}|})/(q-1),m,q^{m-1}-q^{|A_{1}|-1}-q^{|A_{2}|-1}]$ linear code and its weight distribution is given by
\begin{center}
\newcommand{\tabincell}[2]{\begin{tabular}{@{}#1@{}}#2\end{tabular}}
\begin{tabular}{|l|l|}
\hline Weight $w$ & Multiplicity $A_{w}$\\ \hline\hline
                                           $0$ & $1$ \\ \hline
$q^{m-1}$               & $q^{m-|A_{1}\cup A_{2}|}-1$ \\ \hline
$q^{m-1}-q^{|A_{2}|-1}$ & $q^{m-|A_{1}|}-q^{m-|A_{1}\cup A_{2}|}$ \\ \hline
$q^{m-1}-q^{|A_{1}|-1}$ & $q^{m-|A_{2}|}-q^{m-|A_{1}\cup A_{2}|}$ \\ \hline
$q^{m-1}-q^{|A_{1}|-1}-q^{|A_{2}|-1}$ & $q^{m-|A_{1}\cap A_{2}|}-q^{m-|A_{1}|}-q^{m-|A_{2}|}+q^{m-|A_{1}\cup A_{2}|}$ \\\hline
$q^{m-1}-q^{|A_{1}|-1}-q^{|A_{2}|-1}+q^{|A_{1}\cap A_{2}|-1}$ & $q^{m}-q^{m-|A_{1}\cap A_{2}|}$\\ \hline
\end{tabular}
\end{center}
Moreover, we have the followings:
\begin{enumerate}
\item [1)] When $|A_{1}\cap A_{2}|=0$ and $|A_{1}|=|A_{2}|$, $\C_{\overline{\Delta}^{c}}$ is a near Griesmer code (also distance-optimal) if $q=2$ and it is a Griesmer code if $q>2$. It reduces to a $3$-weight code in this case.
\item [2)] When $|A_{1}\cap A_{2}|=0$ and $|A_{1}|< |A_{2}|$, $\C_{\overline{\Delta}^{c}}$ is a Griesmer code and it reduces to a $4$-weight code.
\item [3)] When $|A_{1}\cap A_{2}|>0$ and $|A_{1}|=|A_{2}|$, $\C_{\overline{\Delta}^{c}}$ is distance-optimal if $\ell(T)+(q-1)(v(T)+1)>q^{|A_{1}\cap A_{2}|}+1$ and it reduces to a $4$-weight code. Specially, $\C_{\overline{\Delta}^{c}}$ is a near Griesmer code if $q>2$ and $|A_{1}\cap A_{2}|=1$.
\item [4)] When $|A_{1}\cap A_{2}|>0$ and $|A_{1}|< |A_{2}|$, $\C_{\overline{\Delta}^{c}}$ is distance-optimal if $(q-1)|A_{1}|+1>q^{|A_{1}\cap A_{2}|}$. Specially, $\C_{\overline{\Delta}^{c}}$ is a near Griesmer code if $|A_{1}\cap A_{2}|=1$.
\end{enumerate}
\end{thm}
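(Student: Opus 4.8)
The plan is to deduce the entire statement from Theorem~\ref{optimalcode-anti-sc-pro} (and, where the intersection is empty, from Corollary~\ref{optimalcode-anti-sc-pro-case1}) by specializing to $h=2$, so that no new machinery is needed. For the parameters I would apply part 1) of Theorem~\ref{optimalcode-anti-sc-pro} with $\mathcal{A}=\{A_1,A_2\}$: inclusion--exclusion over the two maximal elements gives $|\Delta|=q^{|A_1|}+q^{|A_2|}-q^{|A_1\cap A_2|}$, hence the length $(q^m-q^{|A_1|}-q^{|A_2|}+q^{|A_1\cap A_2|})/(q-1)$, the dimension $m$, and the minimum distance $q^{m-1}-T$ with $T=q^{|A_1|-1}+q^{|A_2|-1}$; the hypothesis $q^m>q^{|A_1|}+q^{|A_2|}$ is exactly what forces $d>0$.

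For the weight distribution I would evaluate the weight enumerator of part 4). Here $\Omega=\{\{A_1\},\{A_2\},\{A_1,A_2\}\}$ has $2^3=8$ subsets $R$; by Remark~\ref{remark-thm1-wt} (namely $|\Psi_R|=0$ whenever some $S_1\subseteq S_2$ has $S_2\in R$ but $S_1\notin R$) the three subsets containing $\{A_1,A_2\}$ but missing $\{A_1\}$ or $\{A_2\}$ drop out, leaving $R=\emptyset,\ \{\{A_1\}\},\ \{\{A_2\}\},\ \{\{A_1\},\{A_2\}\},\ \Omega$. Plugging each of these into $w=q^{m-1}-\sum_{S\in R}(-1)^{|S|-1}q^{|\cap S|-1}$ and into $|\Psi_R|=q^{m-|\cup_{S\in\Omega\setminus R}(\cap S)|}-\sum_{\emptyset\neq E\subseteq R}(-1)^{|E|-1}q^{m-|(\cup_{L\in E}(\cap L))\cup(\cup_{S\in\Omega\setminus R}(\cap S))|}$, and simplifying with elementary set identities such as $A_i\cup(A_1\cap A_2)=A_i$ and $\cap\{A_1,A_2\}=A_1\cap A_2$, reproduces the six rows of the table. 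The ``at most $5$-weight'' assertion and the sharper weight counts in 1)--4) then follow by collecting the rows whose weights coincide or whose multiplicity vanishes in the sub-case at hand: $|A_1|=|A_2|$ merges the two rows of weight $q^{m-1}-q^{|A_i|-1}$, while $|A_1\cap A_2|=0$ forces the last row's multiplicity $q^m-q^{m-|A_1\cap A_2|}$ to equal $0$.

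For the optimality I would split on whether $A_1\cap A_2=\emptyset$. If it is, the code is the one treated in Corollary~\ref{optimalcode-anti-sc-pro-case1}: it is Griesmer iff at most $q-1$ of the $|A_i|$ agree, which gives 2) at once and the Griesmer half of 1) for $q>2$; for 1) with $q=2$ one has $T=2q^{|A_1|-1}=q^{|A_1|}$, so $\ell(T)=1=h-(q-1)$ and part 2) of the corollary makes it a near Griesmer code, while $q\mid d$ (equivalently $\ell(T)+(q-1)(v(T)+1)=|A_1|+2>2=h$ in part 3)) makes it distance-optimal. If $A_1\cap A_2\neq\emptyset$, Corollary~\ref{optimalcode-anti-sc-pro-case1} no longer applies and I would use parts 2), 3) of Theorem~\ref{optimalcode-anti-sc-pro} directly: the code is not Griesmer because $|A_1\cap A_2|\neq0$, and since $qT-|\Delta|=q^{|A_1\cap A_2|}$ the distance-optimality condition $|\Delta|-1+(q-1)(v(T)+1)>qT-\ell(T)$ becomes $(q-1)(v(T)+1)+\ell(T)>q^{|A_1\cap A_2|}+1$; reading $v(T)$ and $\ell(T)$ off the base-$q$ digits of $T$ gives $v(T)=|A_1|-1$, $\ell(T)=2$ when $|A_1|<|A_2|$, and $T=2q^{|A_1|-1}$ (so $v(T)=|A_1|-1$, $\ell(T)=2$ for $q>2$, but $v(T)=|A_1|$, $\ell(T)=1$ for $q=2$) when $|A_1|=|A_2|$, which yields the conditions stated in 3) and 4). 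The near-Griesmer sub-cases ($|A_1\cap A_2|=1$, together with $q>2$ in 3)) come from the identity $n-g(m,d)=\frac{1}{q-1}(|\Delta_{\mathcal{A}_2}|-\ell(T)+1)=\frac{1}{q-1}(q^{|A_1\cap A_2|}-\ell(T)+1)$ lifted from the proof of part 2) of Theorem~\ref{optimalcode-anti-sc-pro}, which equals $1$ exactly when $q^{|A_1\cap A_2|}-\ell(T)+1=q-1$.

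There is no genuine conceptual hurdle: every clause is an instance of the general theorem. The real work --- and the only place an easy slip would change the answer --- is the bookkeeping: discarding the vanishing $\Psi_R$ correctly and evaluating the remaining set-cardinality expressions, and, in 3)--4), tracking $v(T)$ and $\ell(T)$ through the base-$q$ expansion of $T=q^{|A_1|-1}+q^{|A_2|-1}$, where in characteristic $2$ the carry $q^{a}+q^{a}=q^{a+1}$ must be handled separately.
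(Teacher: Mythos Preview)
Your proposal is correct and follows essentially the same route as the paper: both specialize Theorem~\ref{optimalcode-anti-sc-pro} to $h=2$, read off the parameters from part~1), compute the weight distribution by enumerating the five surviving subsets $R\subseteq\Omega$ in part~4), and settle the optimality statements by evaluating $n-g(m,d)$ and $g(m,d+1)-n$ via Lemma~\ref{g(m,d)} after working out $\ell(T)$ and $v(T)$ from the base-$q$ digits of $T$. The only cosmetic difference is that you invoke Corollary~\ref{optimalcode-anti-sc-pro-case1} for the disjoint case while the paper computes those sub-cases directly from \eqref{eq-cor3-1}--\eqref{eq-cor3-4}; the content is identical.
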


\begin{proof}
By the definition of simplicial complexes, the condition $A_{i}\backslash (\cup_{1\leq j \leq h, j\ne i}A_{j}) \ne \emptyset$  in Theorem \ref{optimalcode-anti-sc-pro} always holds for $h=2$.
Then the parameters of $\C_{\overline{\Delta}^{c}}$ follow directly from 1) of Theorem \ref{optimalcode-anti-sc-pro}. Due to 4) of Theorem \ref{optimalcode-anti-sc-pro}, we can determine each possible Hamming weight of the codewords in $\C_{\overline{\Delta}^{c}}$ and its corresponding multiplicity as $R$ runs through all subsets of $\Omega=\{S: S \subseteq \mathcal{A}, S\ne \emptyset \}=\{\,\{A_{1}\}, \{A_{2}\}, \{A_{1},A_{2}\}\,\}$. Then the weight distribution can be obtained by some careful computation. Next we discuss the optimality of $\C_{\overline{\Delta}^{c}}$ case by case.

Note that the length of $\C_{\overline{\Delta}^{c}}$ is $n=\frac{1}{q-1}(q^m-q^{|A_{1}|}-q^{|A_{2}|}+q^{|A_{1}\cap A_{2}|})$ and the minimum distance of $\C_{\overline{\Delta}^{c}}$ is $d=q^{m-1}-q^{|A_{1}|-1}-q^{|A_{2}|-1}$.
Then by Lemma \ref{g(m,d)} we have
\begin{equation} \label{eq-cor3-1}
g(m,d)= \frac{1}{q-1}(q^m-q^{|A_{1}|}-q^{|A_{2}|}+\ell(T)-1)
\end{equation}
and
\begin{equation} \label{eq-cor3-2}
g(m,d+1)= \frac{1}{q-1}(q^m-q^{|A_{1}|}-q^{|A_{2}|}+\ell(T)-1)+v(T)+1,
\end{equation}
where $T=q^{|A_{1}|-1}+q^{|A_{2}|-1}$. As for $\ell(T)$ and $v(T)$, one can easily verify that
\begin{equation}  \label{eq-cor3-3}
\ell(T)=\left\{\begin{array}{ll}
1,     &   \mbox{ if $q=2$ and $|A_{1}| = |A_{2}|$};\\
2,  &   \mbox{ otherwise},
\end{array} \right.
\end{equation}
and
\begin{equation}  \label{eq-cor3-4}
v(T)=\left\{\begin{array}{ll}
|A_{1}|,     &   \mbox{ if $q=2$ and $|A_{1}| = |A_{2}|$};\\
|A_{1}|-1,  &   \mbox{ otherwise}.
\end{array} \right.
\end{equation}

Therefore we have the followings:

1). In the case $|A_{1}\cap A_{2}|=0$ and $|A_{1}|=|A_{2}|$, we have $n-g(m,d)=\frac{1}{q-1}(2-\ell(T))$. This together with \eqref{eq-cor3-3} implies that $\C_{\overline{\Delta}^{c}}$ is a near Griesmer code if $q=2$ and it is a Griesmer code if $q>2$.
Moreover, $\C_{\overline{\Delta}^{c}}$ is always distance-optimal since $g(m,d+1)-n=\frac{1}{q-1}(\ell(T)-2)+v(T)+1>0$.
Notice that it reduces to a $3$-weight code.

2). In the case $|A_{1}\cap A_{2}|=0$ and $|A_{1}|< |A_{2}|$, $\C_{\overline{\Delta}^{c}}$ is a  Griesmer code since $n-g(m,d)=\frac{1}{q-1}(2-\ell(T))=0$. Observe that it reduces to a $4$-weight code.

3). In the case $|A_{1}\cap A_{2}|>0$ and $|A_{1}|=|A_{2}|$, we have $n-g(m,d)=\frac{1}{q-1}(q^{|A_{1}\cap A_{2}|}+1-\ell(T))>0$. Specially, if $q>2$ and $|A_{1}\cap A_{2}|=1$, it gives $n-g(m,d)=1 $, which implies that $\C_{\overline{\Delta}^{c}}$ is a near Griesmer code. By \eqref{eq-cor3-2}, $\C_{\overline{\Delta}^{c}}$ is distance-optimal if $g(m,d+1)-n=\frac{1}{q-1}(\ell(T)-1-q^{|A_{1}\cap A_{2}|})+v(T)+1>0$ which is equivalent to $\ell(T)+(q-1)(v(T)+1)>q^{|A_{1}\cap A_{2}|}+1$.  Notice that it reduces to a $4$-weight code.

4). In the case  $|A_{1}\cap A_{2}|>0$ and $|A_{1}|< |A_{2}|$, we have $n-g(m,d)=\frac{1}{q-1}(q^{|A_{1}\cap A_{2}|}-1)>0$. Clearly, $\C_{\overline{\Delta}^{c}}$ is a near Griesmer code if $|A_{1}\cap A_{2}|=1$. By \eqref{eq-cor3-2}, $\C_{\overline{\Delta}^{c}}$ is distance-optimal if $g(m,d+1)-n>0$, i.e., $(q-1)|A_{1}|+1>q^{|A_{1}\cap A_{2}|}$.
This completes the proof.
\end{proof}
%
%\begin{remark}
%Note that the condition $A_{i}\backslash (\cup_{1\leq j \leq h, j\ne i}A_{j}) \ne \emptyset$ in Theorem \ref{optimalcode-anti-sc-pro} reduces to $A_{1}\backslash A_{2} \ne \emptyset$ and $A_{2}\backslash A_{1} \ne \emptyset$ if $h=2$. Thus this condition always holds in Theorem \ref{optimalcode-anti-sc-h=2} due to the definition of maximal elements of a simplicial complex. Moreover, the condition $q^{m} > q^{|A_{1}|}+q^{|A_{2}|}$ in Theorem \ref{optimalcode-anti-sc-h=2} always holds except the special case $q=2$ and $|A_{1}|=|A_{2}|=m-1$.
%\end{remark}

\begin{remark}
The conditions $\ell(T)+(q-1)(v(T)+1)>q^{|A_{1}\cap A_{2}|}+1$ and $(q-1)|A_{1}|+1>q^{|A_{1}\cap A_{2}|}$ in 3) and 4) of Theorem \ref{optimalcode-anti-sc-h=2} for $\C_{\overline{\Delta}^{c}}$ to be distance-optimal can be easily satisfied if $|A_{1}|$ is large enough and $|A_{1}\cap A_{2}|$ is small enough due to \eqref{eq-cor3-3} and \eqref{eq-cor3-4} in the proof of Theorem \ref{optimalcode-anti-sc-h=2}. Thus many distance-optimal linear codes can be obtained since such $A_{1}$ and $A_{2}$ are abundant.
\end{remark}

\begin{example}   %3-weight
Let $q=3$, $m=5$, $A_{1}=\{1,2\}$, $A_{2}=\{3,4\}$. Magma experiments show that $\C_{\overline{\Delta}^{c}}$ is a $[113,5,75]$ linear code over ${\mathbb F}_{3}$ with weight enumerator $1+192z^{75}+48z^{78}+2z^{81}$, which is consistent with our result in Theorem \ref{optimalcode-anti-sc-h=2}. This code is a Griesmer code by the Griesmer bound and is optimal due to \cite{GMB}.
\end{example}

\begin{example}  \label{exam-5} %5-weight
Let $q=3$, $m=5$, $A_{1}=\{1,2\}$, $A_{2}=\{2,3,4\}$. Magma experiments show that $\C_{\overline{\Delta}^{c}}$ is a $[105,5,69]$ linear code over ${\mathbb F}_{3}$ with weight enumerator $1+48z^{69}+162z^{70}+24z^{72}+6z^{78}+2z^{81}$, which is consistent with our result in Theorem \ref{optimalcode-anti-sc-h=2}. This code is optimal due to \cite{GMB}.
\end{example}

\begin{table}[!htb]\footnotesize
\caption{Weight distribution of $\C_{\overline{\Delta}^{c}}$ in Theorem \ref{optimalcode-anti-sc-h=3}}  \label{wd-h=3}
\centering
\newcommand{\tabincell}[2]{\begin{tabular}{@{}#1@{}}#2\end{tabular}}
\begin{tabular}{|l|l|}
\hline Weight $w$ & Multiplicity $A_{w}$              \\ \hline \hline
$0$ & $1$                                                  \\\hline
$q^{m-1}$               & $q^{m-|A_{1} \cup A_{2} \cup A_{3}|}-1$  \\\hline
$q^{m-1}-q^{|A_{1}|-1}$ & $q^{m-|A_{2} \cup A_{3}|}-q^{m-|A_{1} \cup A_{2} \cup A_{3}|}$   \\\hline
$q^{m-1}-q^{|A_{2}|-1}$ & $q^{m-|A_{1} \cup A_{3}|}-q^{m-|A_{1} \cup A_{2} \cup A_{3}|}$   \\\hline
$q^{m-1}-q^{|A_{3}|-1}$ & $q^{m-|A_{1} \cup A_{2}|}-q^{m-|A_{1} \cup A_{2} \cup A_{3}|}$   \\\hline
$q^{m-1}-q^{|A_{1}|-1}-q^{|A_{2}|-1}$ & $q^{m-|A_{3} \cup (A_{1} \cap A_{2})|}-q^{m-|A_{1} \cup A_{3}|}-q^{m-|A_{2} \cup A_{3}|}+q^{m-|A_{1} \cup A_{2} \cup A_{3}|}$  \\\hline
$q^{m-1}-q^{|A_{1}|-1}-q^{|A_{2}|-1}+q^{|A_{1} \cap A_{2}|-1}$ & $q^{m-|A_{3}|}-q^{m-|A_{3} \cup (A_{1} \cap A_{2})|}$ \\\hline
$q^{m-1}-q^{|A_{1}|-1}-q^{|A_{3}|-1}$ & $q^{m-|A_{2} \cup (A_{1} \cap A_{3})|}-q^{m-|A_{1} \cup A_{2}|}-q^{m-|A_{2} \cup A_{3}|}+q^{m-|A_{1} \cup A_{2} \cup A_{3}|}$ \\\hline
$q^{m-1}-q^{|A_{1}|-1}-q^{|A_{3}|-1}+q^{|A_{1} \cap A_{3}|-1}$ & $q^{m-|A_{2}|}-q^{m-|A_{2} \cup (A_{1} \cap A_{3})|}$ \\\hline
$q^{m-1}-q^{|A_{2}|-1}-q^{|A_{3}|-1}$ & $q^{m-|A_{1} \cup (A_{2} \cap A_{3})|}-q^{m-|A_{1} \cup A_{2}|}-q^{m-|A_{1} \cup A_{3}|}+q^{m-|A_{1} \cup A_{2} \cup A_{3}|}$ \\\hline
$q^{m-1}-q^{|A_{2}|-1}-q^{|A_{3}|-1}+q^{|A_{2} \cap A_{3}|-1}$ & $q^{m-|A_{1}|}-q^{m-|A_{1} \cup (A_{2} \cap A_{3})|}$ \\\hline
$q^{m-1}-\sum_{i=1}^{3}q^{|A_{i}|-1}$ &
\tabincell{l}{$q^{m-|(A_{1} \cap A_{2}) \cup (A_{1} \cap A_{3})\cup (A_{2} \cap A_{3})|}-q^{m-|A_{1}\cup(A_{2} \cap A_{3})|}-q^{m-|A_{2}\cup(A_{1} \cap A_{3})|}-q^{m-|A_{3}\cup(A_{1} \cap A_{2})|}$\\$+q^{m-|A_{1}\cup A_{2}|}+q^{m-|A_{1}\cup A_{3}|}+q^{m-|A_{2}\cup A_{3}|}-q^{m-|A_{1}\cup A_{2}\cup A_{3}|}$} \\\hline
$q^{m-1}-\sum_{i=1}^{3}q^{|A_{i}|-1}+q^{|A_{2} \cap A_{3}|-1}$ &
$q^{m-|(A_{1} \cap A_{2}) \cup (A_{1} \cap A_{3})|}-q^{m-|A_{1}|}-q^{m-|(A_{1} \cap A_{2}) \cup (A_{1} \cap A_{3})\cup (A_{2} \cap A_{3})|}+q^{m-|A_{1}\cup(A_{2}\cap A_{3})|}$  \\\hline
$q^{m-1}-\sum_{i=1}^{3}q^{|A_{i}|-1}+q^{|A_{1} \cap A_{3}|-1}$ &
$q^{m-|(A_{1} \cap A_{2}) \cup (A_{2} \cap A_{3})|}-q^{m-|A_{2}|}-q^{m-|(A_{1} \cap A_{2}) \cup (A_{1} \cap A_{3})\cup (A_{2} \cap A_{3})|}+q^{m-|A_{2}\cup(A_{1}\cap A_{3})|}$  \\\hline
$q^{m-1}-\sum_{i=1}^{3}q^{|A_{i}|-1}+q^{|A_{1} \cap A_{2}|-1}$ &
$q^{m-|(A_{1} \cap A_{3}) \cup (A_{2} \cap A_{3})|}-q^{m-|A_{3}|}-q^{m-|(A_{1} \cap A_{2}) \cup (A_{1} \cap A_{3})\cup (A_{2} \cap A_{3})|}+q^{m-|A_{3}\cup(A_{1}\cap A_{2})|}$   \\\hline
$q^{m-1}-\sum_{i=1}^{3}q^{|A_{i}|-1}+q^{|A_{1} \cap A_{3}|-1}+q^{|A_{2} \cap A_{3}|-1}$ &
$q^{m-|A_{1} \cap A_{2}|}-q^{m-|(A_{1} \cap A_{2}) \cup(A_{1} \cap A_{3})|}-q^{m-|(A_{1} \cap A_{2}) \cup(A_{2} \cap A_{3})|}+q^{m-|(A_{1} \cap A_{2}) \cup (A_{1} \cap A_{3})\cup (A_{2} \cap A_{3})|}$  \\\hline
$q^{m-1}-\sum_{i=1}^{3}q^{|A_{i}|-1}+q^{|A_{1} \cap A_{2}|-1}+q^{|A_{2} \cap A_{3}|-1}$ &
$q^{m-|A_{1} \cap A_{3}|}-q^{m-|(A_{1} \cap A_{2}) \cup(A_{1} \cap A_{3})|}-q^{m-|(A_{1} \cap A_{3}) \cup(A_{2} \cap A_{3})|}+q^{m-|(A_{1} \cap A_{2}) \cup (A_{1} \cap A_{3})\cup (A_{2} \cap A_{3})|}$  \\\hline
$q^{m-1}-\sum_{i=1}^{3}q^{|A_{i}|-1}+q^{|A_{1} \cap A_{2}|-1}+q^{|A_{1} \cap A_{3}|-1}$ &
$q^{m-|A_{2} \cap A_{3}|}-q^{m-|(A_{1} \cap A_{2}) \cup(A_{2} \cap A_{3})|}-q^{m-|(A_{1} \cap A_{3}) \cup(A_{2} \cap A_{3})|}+q^{m-|(A_{1} \cap A_{2}) \cup (A_{1} \cap A_{3})\cup (A_{2} \cap A_{3})|}$   \\\hline
$q^{m-1}-\sum_{i=1}^{3}q^{|A_{i}|-1}+\sum_{1\leq i < j \leq 3}q^{|A_{i} \cap A_{j}|-1}$ &
\tabincell{l}{$q^{m-|A_{1} \cap A_{2} \cap A_{3}|}-q^{m-|A_{1} \cap A_{2}|}-q^{m-|A_{1} \cap A_{3}|}-q^{m-|A_{2} \cap A_{3}|}+q^{m-|(A_{1} \cap A_{2}) \cup (A_{1} \cap A_{3})|}$\\$+q^{m-|(A_{1} \cap A_{2}) \cup (A_{2} \cap A_{3})|}+q^{m-|(A_{1} \cap A_{3}) \cup (A_{2} \cap A_{3})|}-q^{m-|(A_{1} \cap A_{2}) \cup (A_{1} \cap A_{3})\cup (A_{2} \cap A_{3})|}$}   \\\hline
$q^{m-1}-q^{-1}|\Delta|$&
$q^{m}-q^{m-|A_{1} \cap A_{2} \cap A_{3}|}$  \\\hline
\end{tabular}
\end{table}

\begin{thm} \label{optimalcode-anti-sc-h=3}
Let $\Delta$ be a simplicial complex of $\fqm$ with the support $\mathcal{A}=\{A_{1},A_{2},A_{3}\}$, where $1 \leq |A_{1}| \leq |A_{2}| \leq |A_{3}|<m$.
Assume that $A_{i}\backslash (\cup_{1\leq j \leq 3, j\ne i}A_{j}) \ne \emptyset$ for any $1\leq i \leq 3$, and
$q^{m} > \sum_{1\leq i\leq 3} q^{|A_{i}|}$. Let $T=\sum_{1\leq i\leq 3}q^{|A_{i}|-1}$. Then $\C_{\overline{\Delta}^{c}}$ defined by \eqref{CD} is an at most $19$-weight $[(q^m-|\Delta|)/(q-1),m,q^{m-1}-T]$ linear code with weight distribution in Table \ref{wd-h=3}, where $|\Delta|=\sum_{i=1}^{3}q^{|A_{i}|}-\sum_{1\leq i < j \leq 3}q^{|A_{i} \cap A_{j}|}+q^{|A_{1} \cap A_{2} \cap A_{3}|}$.
Moreover, we have the followings:
\begin{enumerate}
\item [1)] $\C_{\overline{\Delta}^{c}}$ is a Griesmer code if and only if $|A_{i}\cap A_{j}|=0$ for $1\leq i<j \leq 3$ and at most $q-1$ of $|A_{i}|$'s are the same (which always holds for $q>3$).
\item [2)] $\C_{\overline{\Delta}^{c}}$ is a near Griesmer code if one of the followings holds: i) $|A_{i}\cap A_{j}|=1$ for only one element $(i,j)$ in the set $\{(i,j): 1\leq i<j \leq 3\}$ and $|A_{i}\cap A_{j}|=0$ for the other two $(i,j)$'s, and at most $q-1$ of $|A_{i}|$'s are the same; ii) $q=3$, $|A_{i}\cap A_{j}|=0$ for $1\leq i<j \leq 3$, and $|A_{1}|=|A_{2}|=|A_{3}|$; and iii) $q=2$, $|A_{i}\cap A_{j}|=0$ for $1\leq i<j \leq 3$, and $|A_{1}|=|A_{2}|<|A_{3}|-1$ or $|A_{1}|\leq |A_{2}|=|A_{3}|$.
\item [3)] $\C_{\overline{\Delta}^{c}}$ is distance-optimal if $(q-1)(v(T)+1)+\ell(T)-1>\sum_{1\leq i < j \leq 3}q^{|A_{i} \cap A_{j}|}-q^{|A_{1} \cap A_{2} \cap A_{3}|}$.
\end{enumerate}
\end{thm}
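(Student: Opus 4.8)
The plan is to obtain everything by specializing Theorem~\ref{optimalcode-anti-sc-pro} to $h=3$, so the argument is organized bookkeeping rather than a new idea. Write $\mathcal{A}=\{A_1,A_2,A_3\}$, so that $\Omega=\{S\subseteq\mathcal{A}:S\neq\emptyset\}$ consists of the three singletons $\{A_1\},\{A_2\},\{A_3\}$, the three pairs $\{A_1,A_2\},\{A_1,A_3\},\{A_2,A_3\}$, and $\{A_1,A_2,A_3\}$, partially ordered by inclusion. The parameters are immediate from part~1) of Theorem~\ref{optimalcode-anti-sc-pro}: the three-set inclusion--exclusion gives $|\Delta|=\sum_{i=1}^{3}q^{|A_i|}-\sum_{1\le i<j\le 3}q^{|A_i\cap A_j|}+q^{|A_1\cap A_2\cap A_3|}$, and the length, the dimension $m$ and the minimum distance $q^{m-1}-T$ follow directly.

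\textbf{Weight distribution.} I would apply part~4) of Theorem~\ref{optimalcode-anti-sc-pro}: the nonzero weights of $\C_{\overline{\Delta}^{c}}$ are $q^{m-1}-\sum_{S\in R}(-1)^{|S|-1}q^{|\cap S|-1}$ as $R$ runs over the subsets of $\Omega$, each occurring $|\Psi_R|$ times. By the reduction recorded in Remark~\ref{remark-thm1-wt}, $|\Psi_R|=0$ unless $R$ is a down-set of $(\Omega,\subseteq)$, since for $S_1\subseteq S_2$ one has $\Delta_{\cap S_1}^{\bot}\subseteq\Delta_{\cap S_2}^{\bot}$, so if some $S_2\in R$ has a proper subset $S_1\in\Omega\setminus R$ then $\Psi_R=\emptyset$. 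A short count (splitting according to how many of the three singletons lie in the down-set: $1+3+6+9$) shows $(\Omega,\subseteq)$ has exactly $19$ down-sets, hence $\C_{\overline{\Delta}^{c}}$ is at most $19$-weight. It remains to identify, for each of these $19$ down-sets $R$, its weight and the value of $|\Psi_R|$ from
\[ |\Psi_R|=q^{m-|\cup_{S\in\Omega\setminus R}(\cap S)|}-\sum_{\emptyset\neq E\subseteq R}(-1)^{|E|-1}q^{m-|(\cup_{L\in E}(\cap L))\cup(\cup_{S\in\Omega\setminus R}(\cap S))|}. \]
Here $\Omega\setminus R$ is an up-set and $\cap S$ shrinks as $S$ grows, so $\cup_{S\in\Omega\setminus R}(\cap S)$ is just the union of $\cap S$ over the minimal members of $\Omega\setminus R$; the extreme cases $R=\emptyset$ (weight $q^{m-1}$, multiplicity $q^{m-|A_1\cup A_2\cup A_3|}-1$ by \eqref{eq-thm1-wt-Fre-1}) and $R=\Omega$ (where $\Psi_\Omega=\fqm^*\setminus\Delta_{A_1\cap A_2\cap A_3}^{\bot}$ has size $q^m-q^{m-|A_1\cap A_2\cap A_3|}$, the last row of the table) are quick sanity checks. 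Matching the remaining seventeen pairs (weight, multiplicity) against Table~\ref{wd-h=3} is routine but lengthy, and I expect this to be the main practical obstacle: for each $R$ the exponents $|(\cup_{L\in E}(\cap L))\cup(\cup_{S\in\Omega\setminus R}(\cap S))|$ must be simplified by a careful case analysis keyed to which singletons and which pairs of $\mathcal{A}$ belong to $R$, and that is precisely what produces the unions of pairwise and triple intersections displayed in the table.

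\textbf{Optimality, items 1) and 3).} Item~1) is precisely part~2) of Theorem~\ref{optimalcode-anti-sc-pro} with $h=3$; moreover, when $q>3$ one has $q-1\ge 3$, so among the three numbers $|A_1|,|A_2|,|A_3|$ the clause ``at most $q-1$ are the same'' holds automatically, leaving only the condition $|A_i\cap A_j|=0$. Item~3) is part~3) rewritten: since $qT-|\Delta|=\sum_{1\le i<j\le 3}q^{|A_i\cap A_j|}-q^{|A_1\cap A_2\cap A_3|}$, the criterion $|\Delta|-1+(q-1)(v(T)+1)>qT-\ell(T)$ becomes the stated inequality.

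\textbf{Optimality, item 2).} I would set $\Sigma=\sum_{1\le i<j\le 3}q^{|A_i\cap A_j|}-q^{|A_1\cap A_2\cap A_3|}$ and use Lemma~\ref{g(m,d)} to get $n-g(m,d)=\tfrac{1}{q-1}(1+qT-|\Delta|-\ell(T))=\tfrac{1}{q-1}(1+\Sigma-\ell(T))$, then check $n-g(m,d)=1$ in each listed configuration. In i) the triple intersection is empty so $\Sigma=q+1$, and the hypothesis that at most $q-1$ of the $|A_i|$ coincide forces the base-$q$ digit sum $\ell(T)=3$, hence $n-g(m,d)=1$. In ii) and iii) all pairwise intersections are empty so $\Sigma=2$; in ii) $q=3$ and the common value $\epsilon=|A_1|=|A_2|=|A_3|$ give $T=3q^{\epsilon-1}=q^{\epsilon}$, so $\ell(T)=1$ and $n-g(m,d)=1$; in iii) $q=2$ and the stated size separations ensure $T$ is a sum of two distinct powers of $2$, so $\ell(T)=2$ and $n-g(m,d)=1$. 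Finally, in each of i)--iii) one checks that the code is not itself a Griesmer code (by item~1), since in each case either some $|A_i\cap A_j|\neq 0$ or more than $q-1$ of the $|A_i|$ coincide), so these codes are genuinely near Griesmer; this also explains why the excluded binary configuration $|A_1|=|A_2|=|A_3|-1$ with empty intersections (where $\ell(T)=1$ and $n-g(m,d)=2$) does not appear in iii).
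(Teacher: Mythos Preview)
Your proposal is correct and follows essentially the same route as the paper: specialize Theorem~\ref{optimalcode-anti-sc-pro} to $h=3$ for the parameters and the weight distribution, and combine Lemma~\ref{g(m,d)} with a case analysis of $\ell(T)$ to handle the (near) Griesmer and distance-optimal claims. Your explicit down-set count $1+3+6+9=19$ and your check that the near Griesmer cases are not themselves Griesmer are small clarifying additions, but the overall argument matches the paper's.
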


\begin{proof}
The parameters of $\C_{\overline{\Delta}^{c}}$ follow directly from 1) of Theorem \ref{optimalcode-anti-sc-pro}.
By using the formula given in 4) of Theorem \ref{optimalcode-anti-sc-pro}, the weight distribution of $\C_{\overline{\Delta}^{c}}$ can be obtained as $R$ runs through all subsets of $\Omega=\{S: S \subseteq \mathcal{A}, S\ne \emptyset \}=\{\,\{A_{1}\},\{A_{2}\},\{A_{3}\},\{A_{1},A_{2}\},\{A_{1},A_{3}\},\{A_{2},A_{3}\},\{A_{1},A_{2},A_{3}\}\,\}$. Then the weight distribution of $\C_{\overline{\Delta}^{c}}$ is given as in Table \ref{wd-h=3} by a routine computation step by step. Next we discuss the optimality of $\C_{\overline{\Delta}^{c}}$.

Note that the length of $\C_{\overline{\Delta}^{c}}$ is $n=\frac{1}{q-1}(q^m-|\Delta|)$ and the minimum distance of $\C_{\overline{\Delta}^{c}}$ is $d=q^{m-1}-\sum_{1\leq i\leq 3}q^{|A_{i}|-1}$.  By Lemma \ref{g(m,d)}, we have
\begin{equation} \label{eq-cor4-1}
g(m,d)= \frac{1}{q-1}(q^m-\sum_{1\leq i\leq 3}q^{|A_{i}|}+\ell(T)-1),
\end{equation}
%and
%\begin{equation} \label{eq-cor4-2}
%g(m,d+1)= \frac{1}{q-1}(q^m-\sum_{1\leq i\leq 3}q^{|A_{i}|}+\ell(T)-1)+v(T)+1,
%\end{equation}
where $T=\sum_{1\leq i\leq 3}q^{|A_{i}|-1}$. It can be readily verified that when $q>3$, $\ell(T)=3$; when $q=3$, $\ell(T)=1$ if $|A_{1}|=|A_{2}|=|A_{3}|$ and $\ell(T)=3$ otherwise; and when $q=2$, it gives
\[\ell(T)=\left\{\begin{array}{ll}
1,     &   \mbox{ if $|A_{1}|=|A_{2}|=|A_{3}|-1$};\\
2,     &   \mbox{ if $|A_{1}|=|A_{2}|<|A_{3}|-1$ or $|A_{1}|\leq |A_{2}|=|A_{3}|$};\\
3,  &   \mbox{ otherwise}.
\end{array} \right.\]

According to 2) of Theorem \ref{optimalcode-anti-sc-pro}, it proves part 1). If one of the conditions i), ii) and iii) in part 2) is satisfied, one has $n-g(m,d)=1$ by \eqref{eq-cor4-1} and the values of $n$ and $\ell(T)$, which implies that $\C_{\overline{\Delta}^{c}}$ is a near Griesmer code. This proves part 2). Part 3) follows directly from 3) of Theorem \ref{optimalcode-anti-sc-pro}. This completes the proof.
\end{proof}

\begin{remark}
To the best of our knowledge, our result in Theorem \ref{optimalcode-anti-sc-h=3} is the first to construct linear codes by employing simplicial complexes with more than two maximal elements in the literature.
\end{remark}

\begin{remark}
Note that $v(T) \geq |A_{1}|-1$ and $1\leq \ell(T)\leq 3$ in Theorem \ref{optimalcode-anti-sc-h=3} by the definition. Therefore, the sufficient condition $(q-1)(v(T)+1)+\ell(T)-1>\sum_{1\leq i < j \leq 3}q^{|A_{i} \cap A_{j}|}-q^{|A_{1} \cap A_{2} \cap A_{3}|}$ in 3) of Theorem \ref{optimalcode-anti-sc-h=3} such that $\C_{\overline{\Delta}^{c}}$  is distance-optimal can be easily satisfied if $|A_{1}|$ is large enough and $|A_{i} \cap A_{j}|$'s for $1\leq i < j \leq 3$ are small enough. Hence many distance-optimal linear codes can be obtained from Theorem \ref{optimalcode-anti-sc-h=3}.
\end{remark}

\begin{example}
Let $q=3$, $m=5$, $A_{1}=\{1\}$, $A_{2}=\{2\}$ and $A_{3}=\{3\}$. Magma experiments show that $\C_{\overline{\Delta}^{c}}$ is a $[118,5,78]$ linear code over ${\mathbb F}_{3}$ with weight enumerator $1+72z^{78}+108z^{79}+54z^{80}+8z^{81}$, which is consistent with our result in Theorem \ref{optimalcode-anti-sc-h=3}. This code is a near Griesmer code by the Griesmer bound and is optimal due to \cite{GMB}.
\end{example}

\begin{example}
Let $q=3$, $m=5$, $A_{1}=\{1,2\}$, $A_{2}=\{1,3\}$ and $A_{3}=\{4\}$. Magma experiments show that $\C_{\overline{\Delta}^{c}}$ is a $[113,5,74]$ linear code over ${\mathbb F}_{3}$ with weight enumerator $1+24z^{74}+120z^{75}+54z^{76}+24z^{77}+12z^{78}+6z^{80}+2z^{81}$, which is consistent with our result in Theorem \ref{optimalcode-anti-sc-h=3}. This code is a near Griesmer code by the Griesmer bound and is almost optimal due to \cite{GMB}.
\end{example}

\begin{example}   %19-weight
Let $q=2$, $m=12$, $A_{1}=\{1,2,3,4,5\}$, $A_{2}=\{1,2,6,7,8,9\}$ and $A_{3}=\{1,3,4,6,7,8,10,11\}$. Magma experiments show that $\C_{\overline{\Delta}^{c}}$ is a $[3770,12,1872]$ linear code over $\ftwo$ with weight enumerator $1+6z^{1872}+24z^{1874}+48z^{1876}+96z^{1878}+112z^{1880}+224z^{1882}+672z^{1884}+2048z^{1885}+672z^{1886}+6z^{1888}+112z^{1896}
+6z^{1904}+48z^{1908}+6z^{1920}+2z^{2000}+8z^{2002}+2z^{2016}+2z^{2032}+z^{2048}$, which is consistent with our result in Theorem \ref{optimalcode-anti-sc-h=3}. This is actually a $19$-weight code, which means that every possible Hamming weight of $\C_{\overline{\Delta}^{c}}$ in Table \ref{wd-h=3} can be achieved. In addition, by the Griesmer bound, an upper bound of the minimum distance for a $[3770,12]$ code over $\fq$ is given by $1884$ which is quite close to $1872$ due to $1872/1884\approx 0.99363$.
\end{example}

\section{A comparison to the previous works}\label{sec6}

Note that infinite families of (distance-)optimal codes over $\fq$ can be produced from our codes $\C_{\overline{\Delta}^{c}}$ and the parameters of $\C_{\overline{\Delta}^{c}}$ over $\fq$ are very flexible. Now we compare our codes $\C_{\overline{\Delta}^{c}}$ with the known (distance-)optimal linear codes in the previous works. At first, it can be readily checked that our codes $\C_{\overline{\Delta}^{c}}$ constructed in this paper have different parameters to the optimal linear codes in \cite{HDWZ,HZHZ,HJKN,LXSM,LXYQ,PYLY,SMLX,WYHJ,ZXWY}. Next, we make
the comparison of our optimal codes to the known ones with similar parameters (see their parameters in Table \ref{optimal-table}) as follows:
\begin{itemize}
  \item Our codes $\C_{\overline{\Delta}^{c}}$ over $\fq$ are new for $q>2$ compared to the binary linear codes of \cite{JYHLL} since our results extend those of \cite{JYHLL} from $\ftwo$ to $\fq$. Moreover, when $q=2$, compared to \cite{JYHLL}, we further investigate the weight distribution of $\C_{\overline{\Delta}^{c}}$ for the general case and take a more in-depth discussion on the case of simplicial complexes with three maximal elements.
  \item Our codes $\C_{\overline{\Delta}^{c}}$ over $\fq$ are different from the well-known Solomon-Stiffler codes \cite{GSJS} over $\fq$ except when $|A_{i}\cap A_{j}|=0$ for any $1\leq i<j \leq h$ and at most $q-1$ of $|A_{i}|$'s are the same, in which case $\C_{\overline{\Delta}^{c}}$ is a Griesmer code (see Theorem \ref{optimalcode-anti-sc-pro}). Viewed from the construction approach, the Solomon-Stiffler codes are constructed from the complement of the union of disjoint projective subspaces of $\PG(m-1,q)$ where at most $q-1$ of these subspaces have the same dimension, while our codes $\C_{\overline{\Delta}^{c}}$ are constructed from $\overline{\Delta}^{c}$ which can be viewed as the complement of the union of projective subspaces of $\PG(m-1,q)$ without the restriction ``disjoint'' and that on the number of subspaces with the same dimension. This is their crucial difference.
  \item Our codes $\C_{\overline{\Delta}^{c}}$ over $\fq$ are different from those of \cite{HLZWT} in general even though our codes have the same parameters as those of \cite{HLZWT} in some special cases. Note that the codes of
      \cite[Theorem 1]{HLZWT} were constructed from the complement of the union of some distinct subfields ${\mathbb F}_{q^{r_{i}}}$ of $\fqm$, where $r_{i}$'s divide $m$ and any two subfields ${\mathbb F}_{q^{r_{i1}}}$ and ${\mathbb F}_{q^{r_{i2}}}$ must intersect in the subfield ${\mathbb F}_{q^{r}}$ of $\fqm$ with $r=\gcd(r_{i1},r_{i2})$. However, our $|A_{i}|$'s in $\overline{\Delta}^{c}$ are not necessary to divide $m$ and any two distinct projective subspaces of $\PG(m-1,q)$ with fixed dimensions in $\overline{\Delta}^{c}$ can be disjoint or intersect in some projective subspace with any dimension. Therefore, the parameters of our codes $\C_{\overline{\Delta}^{c}}$ are more flexible than those of \cite[Theorem 1]{HLZWT} in the projective case even though their parameters have the same form (see Table \ref{optimal-table}). When $|A_{1}|=\cdots=|A_{h}|=\ell$ and $\ell$ divides $m$, our codes $\C_{\overline{\Delta}^{c}}$ have the same parameters as those of \cite[Theorem 3]{HLZWT} in the projective case. When $h=2$, our codes $\C_{\overline{\Delta}^{c}}$ (see Theorem \ref{optimalcode-anti-sc-h=2}) have more flexible parameters than those of \cite[Theorem 4]{HLZWT} and \cite[Theorem 5]{HLZWT} in the projective case even though their parameters have the same form in this case; in addition, they have different weight distributions.
  \item  Our codes $\C_{\overline{\Delta}^{c}}$ over $\fq$ are different from these (distance-)optimal codes in \cite{CHJY,HJKWY,LYDCTC,WYLCXF,WYZXYQ} since the parameters of our codes are more flexible. However, our codes have the same parameters with these codes in some special cases. When $h=1$ and $|A_{1}|=1$, $h=2$ and $|A_{1}|=|A_{2}|=1$, and $(q,h)=(2,2)$, $|A_{1}|=2$, $|A_{2}|=m-1$ and $|A_{1}\cap A_{2}|=1$,
      our codes $\C_{\overline{\Delta}^{c}}$ have the same parameters with those of \cite[Thm. 18]{LYDCTC},
      \cite[Thm. 19]{LYDCTC} and \cite[Thm. 22]{LYDCTC}, respectively.
      When $(q,h)=(2,1)$ and $m$ is even, our code $\C_{\overline{\Delta}^{c}}$ has the same parameters with that of
      \cite[Thm. 5.2]{WYLCXF} and it reduces to the code of \cite[Thm. 4.3]{WYZXYQ} if $|A_{1}|=m-1$.
      When $(q,h)=(2,2)$, $|A_{1}|=1$ and $|A_{2}|=m-1$, our code $\C_{\overline{\Delta}^{c}}$ has the same parameters as the optimal binary codes in \cite{CHJY,LYDCTC}. When $(q,h)=(2,2)$, $|A_{1}|=1$ (resp. $(q,h)=(2,2)$, $|A_{1}|=2$ and $|A_{1}\cap A_{2}|=1$), our code $\C_{\overline{\Delta}^{c}}$ has the same parameters with that of
      \cite[Thm. 6.1]{HJKWY} (resp. \cite[Thm. 6.2]{HJKWY}).
\end{itemize}

\begin{table}[!htb]\footnotesize
\caption{The parameters $[n,k,d]$ of some (distance-)optimal linear codes over $\fq$}  \label{optimal-table}
\centering
\begin{tabular}{|c|c|c|c|c|c|}
\hline No.&$q$ & $n$&    $k$        & $d$      &   Remarks                  \\ \hline\hline
  1 &$2$ &$2^m-1$&      $m+1$         & $2^{m-1}-1$       &  $m>1$, \cite{CHJY,LYDCTC}  \\\hline

  2 &$2$ &$2^m-2$&      $m+1$         & $2^{m-1}-2$                  & $m\geq 3$,     \cite[Thm. 22]{LYDCTC} \\\hline
  3 &$2$ &$2^{2m-1}$       & $2m$     & $2^{2m-2}$       &    \cite[Thm. 4.3]{WYZXYQ}  \\\hline
  4 &$2$ &$2^n-2^m-1$&      $n$         & $2^{n-1}-2^{m-1}-1$       & $2<m<n$,  \cite[Thm. 6.1]{HJKWY}  \\\hline
  5 &$2$ &$2^n-2^m-2$&      $n$         & $2^{n-1}-2^{m-1}-2$       & $1<m<n-1$,  \cite[Thm. 6.2]{HJKWY} \\\hline
  6 &$2$ &$2^{2m}-2^{|A|+|B|}$       & $2m$     & $2^{2m-1}-2^{|A|+|B|-1}$       &    \cite[Thm. 5.2]{WYLCXF}  \\\hline
  7 &$2$ & $2^n-\sum_{\emptyset \not=S\subseteq \mathcal{F}}(-1)^{|S|-1}2^{|\cap S|}$
         &  $n$ & $2^{n-1}-\sum_{i=1}^{s}2^{|A_{i}|-1}$ &    \cite{JYHLL}  \\\hline
  8 &any &$(q^m-q)/(q-1)-1$&      $m$         & $q^{m-1}-1$       &    \cite[Thm. 18]{LYDCTC} \\\hline
  9 &any &$(q^m-2q+1)/(q-1)$&      $m$         & $q^{m-1}-2$       & $m\geq 2$, \cite[Thm. 19]{LYDCTC} \\\hline
  10 &any &$q^m-\sum_{\emptyset \not=S\subseteq \Upsilon}(-1)^{|S|-1}q^{r_{S}}$&      $m$
     & $(q-1)(q^{m-1}-\sum_{i=1}^{h}q^{r_{i}-1})$       &   \cite[Thm. 1]{HLZWT}    \\\hline
  11 &any &$q^m-(h+1)q^r$&      $m$         & $(q-1)(q^{m-1}-(h+1)q^{r-1})$       & $h+1\leq q$,   \cite[Thm. 2]{HLZWT}  \\\hline
  12 &any &$q^m-(h+1)q^r$&      $m$         & $(q-1)q^{m-1}-hq^{r}$       & $h+1>q$,    \cite[Thm. 2]{HLZWT}    \\\hline
  13 &any &$q^m-hq^r+h-1$&      $m$         & $(q-1)(q^{m-1}-hq^{r-1})$       &    \cite[Thm. 3]{HLZWT}    \\\hline
  14 &any &$(q^{m}-q^{r})(q^{k}-q^{s})$&      $m+k$         & $(q-1)(q^{m+k-1}-q^{m+s-1}-q^{k+r-1})$       &
       \cite[Thm. 4]{HLZWT}    \\\hline
  15 &any &$q^{m+k}-q^{m}-q^{k}+1$&      $m+k$         & $(q-1)(q^{m+k-1}-q^{m-1}-q^{k-1})$  &   \cite[Thm. 5]{HLZWT} \\\hline
  16 &any & $(q^m-1-\sum_{i=1}^{h}(q^{u_{i}}-1))/(q-1)$& $m$  & $q^{m-1}-\sum_{i=1}^{h}q^{u_{i}-1}$  & Solomon-Stiffler code \cite{GSJS}\\ \hline
  17 &any &$(q^m-\sum_{\emptyset \not=S\subseteq \mathcal{A}}(-1)^{|S|-1}q^{|\cap S|})/(q-1)$& $m$ &
     $q^{m-1}-\sum_{1\leq i\leq h} q^{|A_{i}|-1}$       &    This paper   \\\hline
\end{tabular}
\end{table}
\section{Concluding remarks} \label{sec5}

The main contributions of this paper are summarized as follows:
\begin{itemize}
  \item We constructed a large family of projective linear codes $\C_{\overline{\Delta}^{c}}$ over $\fq$ from the general simplicial complexes $\Delta$ of $\fq^m$ via the defining-set construction. This totally extends the results of \cite{JYHLL} from $\ftwo$ to $\fq$. To the best of our knowledge, this paper is the first to study linear codes over $\fq$ constructed from the general simplicial complexes of $\fq^m$ for a prime power $q>2$.
  \item The parameters and weight distribution of $\C_{\overline{\Delta}^{c}}$ were completely determined (see Theorem \ref{optimalcode-anti-sc-pro}) in this paper. Thus this paper also determines the weight distribution of the binary codes constructed from the general simplicial complexes of $\ftwo^m$ in \cite[Theorem IV.6]{JYHLL}, in which the weight  distribution of the binary codes were studied only for the case of simplicial complexes of $\ftwo^m$ with two maximal elements. Moreover, as a byproduct, the weight distributions of the Solomon-Stiffler codes are determined in Corollary \ref{optimalcode-anti-sc-pro-case1} for the case that the corresponding subspaces in $\fq^m$ of the projective subspaces $U_{i}$ are spanned by some subsets of a certain basis of $\fq^m$.
  \item By using the Griesmer bound, we gave a necessary and sufficient condition such that  $\C_{\overline{\Delta}^{c}}$ is a Griesmer code and a sufficient condition such that $\C_{\overline{\Delta}^{c}}$ is distance-optimal. In addition, we also presented a necessary and sufficient condition for $\C_{\overline{\Delta}^{c}}$ to be a near Griesmer code in a special case. This shows that many infinite families of (distance-)optimal linear codes can be produced from our construction.
  \item By studying the cases of the simplicial complexes $\Delta$ with one, two and three maximal elements respectively, we   presented the parameters and weight distributions of these codes $\C_{\overline{\Delta}^{c}}$ more explicitly, and  derived infinite families of optimal linear codes with few weights over $\fq$ including Griesmer codes, near Griesmer codes and distance-optimal codes (see Theorems \ref{optimalcode-anti-sc-h=1}, \ref{optimalcode-anti-sc-h=2} and \ref{optimalcode-anti-sc-h=3}).
  \item Most notably, the definition of simplicial complexes of $\fq^m$ of this paper and the main technique used to prove the results in this paper can also be employed to extend the previous results of \cite{CHJY,LXSM,WYLY,WYLCXF,WYZXYQ,ZXWY}, where simplicial complexes of $\ftwo^m$ with one and two maximal elements were utilized to constructed optimal few-weight binary or quaternary linear codes.
\end{itemize}

The construction of optimal linear codes is an interesting problem. The reader is cordially invited to construct more new optimal linear codes.

%\section{Acknowledgements}

\end{document}